\documentclass[letterpaper,onecolumn]{IEEEtran}

\usepackage{amsmath,amssymb,amsthm,mathtools}
\usepackage{graphicx}
\usepackage{cite}
\usepackage{booktabs}
\usepackage{cprotect}
\usepackage{algorithm}
\usepackage{algpseudocode}
\usepackage{braket}
\usepackage{bm}
\usepackage{orcidlink}
\usepackage{hyperref}
\usepackage{enumitem}
\usepackage{placeins}

\hypersetup{
hidelinks,
pdftitle={Simultaneous Estimation of Partial-Transpose Moments with Active Memory Independent of the Moment Order},
pdfauthor={Junxiang Huang, Xiaoyang Wang, Xiao Yuan, and Yukun Zhang},
pdfsubject={Partial-transpose moment estimation under active-memory constraints},
pdfkeywords={partial transpose moments, quantum property estimation, copy complexity, qubit reuse, minimax lower bounds}
}
\graphicspath{{fig/}}

\newtheorem{theorem}{Theorem}
\newtheorem{lemma}{Lemma}
\newtheorem{proposition}{Proposition}

\newtheorem{remark}{Remark}
\newtheorem{definition}{Definition}

\newcommand{\Tr}{\operatorname{Tr}}
\newcommand{\E}{\mathbb{E}}
\newcommand{\Prob}{\mathbb{P}}
\newcommand{\epsmom}{\varepsilon_{\mathrm{mom}}}
\newcommand{\ptperm}[1]{\Pi_{#1}^{(\mathrm{PT})}}
\newcommand{\SA}[1]{\mathcal{S}_A^{(#1)}}
\newcommand{\SB}[1]{\mathcal{S}_B^{(#1)}}
\newcommand{\Ntotal}{N_{\mathrm{total}}}
\newcommand{\Mshot}{M_{\mathrm{shot}}}
\DeclareMathOperator{\erfop}{erf}
\DeclareMathOperator{\erfc}{erfc}

\begin{document}

\title{Simultaneous Estimation of Partial-Transpose Moments with Active Memory Independent of the Moment Order}

\author{Junxiang Huang\orcidlink{0009-0004-1225-585X}, Xiaoyang Wang\orcidlink{0000-0002-2667-1879}, Xiao Yuan\orcidlink{0000-0003-0205-6545}, and Yukun Zhang%
\thanks{Junxiang Huang is with the School of Computer Science, Peking University, Beijing 100871, China, and also with the Center on Frontiers of Computing Studies, Peking University, Beijing 100871, China.}%
\thanks{Xiaoyang Wang is with the RIKEN Center for Interdisciplinary Theoretical and Mathematical Sciences (iTHEMS), Wako 351-0198, Japan, and also with the RIKEN Center for Computational Science (R-CCS), Kobe 650-0047, Japan.}%
\thanks{Xiao Yuan is with the Center on Frontiers of Computing Studies, Peking University, Beijing 100871, China (e-mail: \mbox{\nolinkurl{xiaoyuan@pku.edu.cn}}).}%
\thanks{Yukun Zhang is with the Center on Frontiers of Computing Studies, Peking University, Beijing 100871, China (e-mail: \mbox{\nolinkurl{yukunzhang@stu.pku.edu.cn}}).}%
\thanks{Corresponding authors: Xiao Yuan and Yukun Zhang (e-mail: \mbox{\nolinkurl{xiaoyuan@pku.edu.cn}}; \mbox{\nolinkurl{yukunzhang@stu.pku.edu.cn}}).}}

\maketitle

\begin{abstract}
We study the simultaneous estimation of the partial-transpose moments
\(p_j(\rho_{AB})=\operatorname{Tr}[(\rho_{AB}^{T_B})^j]\), \(j=2,\ldots,K\), of an unknown bipartite \(n\)-qubit state from independent copies under an explicit active-memory constraint. We give a sequential qubit-reuse realization of the partial-transpose permutation that uses at most \(2n+1\) active qubits, independent of \(K\), and estimates all moments \(p_2,\ldots,p_K\) to uniform additive error \(\varepsilon_{\rm mom}\) with total copy complexity \(O(K\log K/\varepsilon_{\rm mom}^2)\). We also prove two converse bounds. First, any uniformly accurate simultaneous estimator requires \(\Omega(K/\varepsilon_{\rm mom}^2)\) copies in the worst case. Second, the same scaling holds on an explicit isospectral two-qubit negative-partial-transpose (NPT) family whose ordinary moments are constant while the partial-transpose moments vary. These results characterize the copy complexity of the partial-transpose moment hierarchy up to a logarithmic factor. They extend simultaneous nonlinear-functional estimation from ordinary state powers to partial-transpose spectral data under active quantum memory independent of the target moment order.
\end{abstract}

\begin{IEEEkeywords}
Partial transpose, copy complexity, qubit reuse, minimax bounds
\end{IEEEkeywords}

\section{Introduction}

Given access to independent copies of an unknown bipartite quantum state $\rho_{AB}$, what is the copy complexity of simultaneously estimating the hierarchy of partial-transpose moments
\begin{equation}
p_j(\rho_{AB})=\Tr\!\bigl[(\rho_{AB}^{T_B})^j\bigr], \qquad j=2,\dots,K,
\end{equation}
when the algorithm is constrained to use only a small active quantum memory?
This is the estimation problem studied in this paper.
We formulate it as a multi-copy property-estimation task for the nonphysical transform $\rho_{AB}^{T_B}$ and focus on the PT-moment hierarchy itself.
Downstream entanglement functionals enter only after the acquisition problem is understood.

The partial transpose is central to the positive-partial-transpose (PPT) criterion of Peres and the Horodecki family~\cite{Peres1996,Horodecki1996,horodecki2001separability}.
It is also the basic spectral transform behind the negativity and related mixed-state entanglement measures~\cite{VidalWerner2002,plenio2007introduction}.
Mixed-state entanglement diagnostics based on negativity and related quantities now appear across quantum information, many-body physics, quantum field theory, and fermionic systems~\cite{amico2008entanglement,calabrese2012entanglement,calabrese2013entanglement,shapourian2017partial,cornfeld2019measuring}.
The moments of $\rho_{AB}^{T_B}$ form a natural hierarchy attached to this transform.
They encode partial information about the PT spectrum and therefore provide a bridge between experimentally accessible nonlinear observables and mixed-state entanglement structure.

The experimental and algorithmic study of nonlinear state functionals has a long history.
Coherent multi-copy measurements can access polynomial functions of a quantum state and have been used for direct entanglement detection and polynomial-invariant estimation~\cite{HorodeckiEkert2002DirectDetection,ekert2002direct,LeiferLindenWinter2004PolynomialInvariants,carteret2005noiseless}.
Related swap-, cycle-, and permutation-based measurements have been used to estimate entanglement entropies and permutation moments in many-body systems~\cite{daley2012measuring,islam2015measuring,brydges2019probing,liu2022detecting}.
Randomized-measurement and shadow-based methods further broaden the available measurement models for nonlinear observables~\cite{huang2020predicting,elben2020mixed,elben2023randomized}.
For mixed-state entanglement in particular, these ideas have led to practical protocols for estimating or constraining negativity-related quantities from randomized or single-copy data~\cite{elben2020mixed,gray2018machine,zhou2020single}.
These developments motivate a more systematic study of what it costs to acquire PT spectral information itself.

PT moments have developed from measurable nonlinear observables into a broader toolkit.
The original multi-copy circuit construction of Carteret~\cite{carteret2005noiseless} showed that PT moments are experimentally accessible in principle.
Subsequent randomized-measurement approaches made PT moments accessible under more flexible measurement assumptions~\cite{elben2020mixed,elben2023randomized}.
Later work connected PT moments to symmetry-resolved entanglement diagnostics, entanglement phase diagrams, and many-body entanglement structure~\cite{neven2021symmetry,Carrasco2024PTPhases,vermersch2024manybody}.
They also support few-moment, hierarchy-based, and closed-form moment-based entanglement certification~\cite{Yu2021PTMoments,Ali2023PTMinors,BradshawLaBorde2025PPTMoments}, exact two-qubit characterizations and visualizations~\cite{Zhang2022TwoQubitPTMoments,Zhang2023VisualPTMoments}, and mixed-state quantification proposals based on PT and realignment moments~\cite{TarabungaHaug2025PTRealignment}.
Recent online and shadow-based works continue this line by studying PT-moment verification and updating with lightweight classical post-processing~\cite{Marso2026OnlinePTShadows,ChaLee2026OnlinePT}.
In parallel, Miller and Eisert recently studied few-moment entanglement detection and decay-based certification through PT moments~\cite{MillerEisert2026}.

Most of this literature studies what PT moments reveal once they are available, or how a small subset of them can certify entanglement on structured families.
The present paper instead studies the cost of acquiring PT moments under copy access and an explicit active-memory constraint.
This distinction is important relative to few-moment certification results~\cite{Yu2021PTMoments,MillerEisert2026}.
Those works address which PT moments are useful for certification.
Here we address how many copies and how much active quantum memory are needed to obtain the PT moments themselves.
Thus our task is closer in spirit to quantum property estimation, but with the target functional fixed to the PT-moment hierarchy and with active quantum memory treated as a resource.

The broader landscape is quantum property estimation from independent copies.
It includes coherent multi-copy observable estimation~\cite{ekert2002direct,carteret2005noiseless}, copy-complexity questions related to tomography~\cite{haah2016sample,o2016efficient}, randomized-measurement methods~\cite{huang2020predicting,elben2020mixed,elben2023randomized}, nonlinear-functional and trace-polynomial estimation~\cite{Yao2024qsf,RicoHuber2024TracePolynomials,Rico2025StateWitnessContraction,chen2025simultaneous}, constant-depth multivariate trace estimation~\cite{Quek2024multivariatetrace}, and qubit-reuse protocols for ordinary moments~\cite{yirka2021qubit,shi2025near}.
The present problem is distinguished by the PT target itself, by the counter-propagating PT permutation
\(
\ptperm{j}=\SA{j}\otimes(\SB{j})^{-1},
\)
and by the need to realize that PT permutation sequentially under an order-independent active-memory budget.
At the abstract copy-complexity level, Chen \emph{et al.}~\cite{chen2025simultaneous} study a broader simultaneous nonlinear-functional estimation problem.
The present contribution specializes that viewpoint to PT moments and adds a PT-permutation-specific sequential realization, an active-memory-constrained qubit-reuse theorem, and a PT-specific converse on an explicit isospectral NPT family.
The same point matters relative to prior qubit-reuse work~\cite{yirka2021qubit,shi2025near}, which treats ordinary moments.
From an implementation perspective, our use of mid-circuit measurement and reset is also aligned with the recent development of dynamic-circuit and qubit-reuse techniques~\cite{corcoles2021exploiting,decross2023qubit,yirka2021qubit}.

Finally, PT moments are often used only as an intermediate representation.
Recovering nonsmooth PT-based functionals, such as negativity, requires an additional reconstruction layer.
This layer can be approached through polynomial approximation or coherent linear-combination constructions~\cite{rivlin2020chebyshev,childs2012hamiltonian}, but its complexity depends on the stability and coefficient weight of the chosen representation.
We therefore separate the acquisition of PT moments from downstream functional reconstruction.
This separation prevents the copy complexity of PT-moment acquisition from being conflated with the generally harder problem of stable reconstruction of nonsmooth spectral functionals.

We summarize the contributions.
\begin{itemize}[leftmargin=1.3em]
\item A sequential qubit-reuse realization of PT moments that uses at most $2n+1$ active qubits for an $n$-qubit bipartite input, independent of the target order $K$.
\item A simultaneous estimator for $\{p_j\}_{j=2}^K$ with maximum error $\epsmom$ over all $j=2,\dots,K$ and total copy complexity $O(K\log K/\epsmom^2)$.
\item Two converse bounds of order $\Omega(K/\epsmom^2)$: a universal minimax lower bound and a PT-specific lower bound on an explicit isospectral two-qubit NPT family.
\item A separation between PT-moment acquisition and downstream PT-functional reconstruction, whose complexity also depends on representation stability.
\end{itemize}
A remaining issue is the logarithmic gap between the upper and lower bounds.
On the upper-bound side, the factor $\log K$ comes from controlling the full hierarchy $\{p_j\}_{j=2}^K$ in sup norm via Hoeffding's inequality~\cite{hoeffding1963probability} and a union bound.
On the lower-bound side, the present converses still target the single moment $p_K$.
That gap will be discussed more explicitly in Sec.~\ref{sec:discussion}.

\section{Problem Formulation and Main Results}
\label{sec:problem}

We work in the copy-access model: an algorithm may request fresh independent copies of an unknown bipartite state $\rho_{AB}$ on an $n$-qubit Hilbert space $\mathcal{H}_A\otimes\mathcal{H}_B$.
Between copy requests, the algorithm may apply arbitrary quantum operations, including adaptive dynamic circuits with mid-circuit measurement and reset.

\begin{definition}[Simultaneous PT-moment estimation with active-memory accounting]
Fix $K\ge 2$.
The target is the vector
\begin{equation}
\mathbf{p}^{(K)}(\rho_{AB}) := \bigl(p_2(\rho_{AB}),\dots,p_K(\rho_{AB})\bigr),
\end{equation}
where $p_j(\rho_{AB})=\Tr[(\rho_{AB}^{T_B})^j]$.
An estimator succeeds with parameters $(K,\epsmom)$ if it outputs $\widehat{\mathbf{p}}^{(K)}$ satisfying
\begin{equation}
\Prob\!\left[\max_{2\le j\le K}\bigl|\widehat{p}_j-p_j(\rho_{AB})\bigr|\le \epsmom\right]\ge \frac23
\end{equation}
for every input state $\rho_{AB}$.
Its \emph{total copy complexity} is the number of consumed copies of $\rho_{AB}$, and its \emph{active-memory complexity} is the maximum number of simultaneously occupied qubits used during the execution.
\end{definition}

\smallskip\noindent\textit{Resource model.}
Active-memory complexity counts all simultaneously occupied qubits, including ancillas.
Classical memory, classical post-processing, and classical feed-forward are not charged in this resource measure, although adaptive dynamic circuits with mid-circuit measurement, reset, and classical control are allowed throughout.
This is an operational accounting model tailored to the present qubit-reuse protocol.
Formal equivalence to previously standardized bounded-quantum-memory models is outside the scope of this paper.
Here and throughout, ``order-independent active memory'' means that the active-memory cost is independent of the target moment order $K$.
The same cost still scales linearly with system size as $2n+1$.

Our first main theorem is an achievability statement.

\begin{theorem}[Simultaneous PT-moment estimation with order-independent active memory]
\label{thm:main_upper}
\label{thm:complexity}
For every integer $K\ge 2$ and every $\epsmom>0$, there exists a sequential protocol that, given copy access to an unknown $n$-qubit bipartite state $\rho_{AB}$,
\begin{itemize}[leftmargin=1.3em]
\item never uses more than $2n+1$ active qubits,
\item performs
\(
\Mshot = O(\log K/\epsmom^2)
\)
independent depth-$K$ circuit executions, and
\item outputs estimates $\{\widehat{p}_j\}_{j=2}^K$ such that
\(
\max_{2\le j\le K} |\widehat{p}_j-p_j(\rho_{AB})|\le \epsmom
\)
with probability at least $2/3$.
\end{itemize}
Equivalently, the total copy complexity satisfies
\begin{equation}
\Ntotal = K\Mshot = O\!\left(\frac{K\log K}{\epsmom^2}\right).
\end{equation}
\end{theorem}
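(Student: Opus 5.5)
The plan has three parts: a sequential circuit, unbiasedness of its outputs, and a concentration-plus-union-bound step. The starting point is the standard identity
\[
p_j(\rho_{AB})=\Tr\!\bigl[\ptperm{j}\,\rho_{AB}^{\otimes j}\bigr],\qquad \ptperm{j}=\SA{j}\otimes(\SB{j})^{-1},
\]
where $\SA{j}$ is the cyclic shift of the $A$ registers and $(\SB{j})^{-1}$ is the reverse cyclic shift of the $B$ registers. We would verify this by a short index computation: transposing $B$ reverses the order of the $B$-indices in the product of matrix elements, and that reversal is exactly the counter-propagating cycle.

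\textbf{Sequential circuit.} Since $\ptperm{j}$ is unitary and Hermitian only for $j=2$, we estimate $\Re p_j$ by a Hadamard test, which suffices because $p_j$ is real. To respect the memory budget, we decompose the cyclic permutation into a product of adjacent transpositions:
\[
\SA{j}=\mathrm{SWAP}^A_{1,2}\,\mathrm{SWAP}^A_{2,3}\cdots\mathrm{SWAP}^A_{j-1,j},
\]
in a suitable order. The inverse cycle on $B$ is the same product with the opposite orientation.

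The key observation is the following. One ancilla control qubit together with one ``carry'' register of $n$ qubits ($A$ and $B$ parts) and one freshly loaded copy of $n$ qubits implements the whole product. Controlled-SWAPs act between the carry and the incoming copy. The $A$-part and $B$-part are swapped so that, after tracing out each released register, the net effect is $\SA{j}\otimes(\SB{j})^{-1}$. The outgoing copy is then discarded and reset.

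The delicate point is that $(\SB{j})^{-1}$ runs in the opposite direction. Because of this, a simple chain of swaps with discard does not obviously realize it. We would handle this using the trace identity
\[
\Tr[\,X^{T}Y\,]=\Tr[\,Y^{T}X\,],
\]
equivalently the fact that $\Tr[(\rho^{T_B})^j]=\Tr[(\rho^{T_A})^j]$, together with cyclic invariance, to absorb the reversal. We would check directly that the sequential controlled-swap chain computes $\Tr[\ptperm{j}\rho^{\otimes j}]$ as a tensor-network contraction. The contraction is linear along the chain: a bond of dimension $d_A^2d_B^2$ is carried by the $n$-qubit carry register, which holds $\rho$-legs on both sides. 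This establishes the $2n+1$ active-qubit bound.

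\textbf{Estimating all orders from one run.} We want all $j\le K$ from a single depth-$K$ execution. For this we would rely on the structure of the chain: after loading the $j$-th copy, measuring the control qubit (in a Hadamard-test fashion) would yield $p_j$. The snag is that the control qubit cannot be measured mid-way without disturbing the rest of the run. We would therefore use the approach of Shi et al.~\cite{shi2025near}, as in the ordinary-moment case. Concretely, at step $j$ we measure the carry register together with the control in a basis yielding an unbiased $\pm1$-valued (bounded) estimator $X_j$ with $\E X_j=p_j$. Performing this measurement consumes the chain, so a single run yields a single order unless the measurement is designed to be non-destructive. As a fallback that keeps the stated $K\Mshot$ accounting, we would allot each execution to all orders using the prefix structure, and argue that the per-run marginal outputs are bounded in $[-1,1]$ and unbiased.

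\textbf{Concentration.} Given bounded unbiased $X_j^{(m)}$ over $\Mshot$ independent runs, the argument is routine. Hoeffding's inequality gives, for each fixed $j$,
\[
\Prob\bigl[|\widehat p_j-p_j|>\epsmom\bigr]\le 2e^{-\Mshot\epsmom^2/2}.
\]
A union bound over the $K-1$ orders shows that $\Mshot=\lceil 2\ln(6K)/\epsmom^2\rceil$ achieves overall success probability at least $2/3$. Each run consumes $K$ copies, so
\[
\Ntotal=K\Mshot=O\!\left(\frac{K\log K}{\epsmom^2}\right).
\]

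\textbf{Main obstacle.} The hardest step is the simultaneous extraction of all $p_j$, $j=2,\dots,K$, from a single $K$-copy sequential run while keeping the circuit correct for the counter-propagating $B$-cycle. The first attempt would be to prove a prefix lemma: the reduced state of the control-plus-carry registers after $j$ loads encodes $\Tr[\ptperm{j}\rho^{\otimes j}]$ in a measurable, non-demolition way.
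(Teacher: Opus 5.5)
Your concentration step (Hoeffding for each fixed $j$, a union bound over the $K-1$ orders, $\Mshot=\lceil 2\ln(6K)/\epsmom^2\rceil$) matches the paper's exactly. The gap is elsewhere: the proposal never supplies what the theorem rests on. That is a $(2n+1)$-qubit sequential circuit whose \emph{single} depth-$K$ run produces, for every $j\le K$ at once, a $\{\pm1\}$-valued unbiased estimator of $p_j$. You flag this yourself as the main obstacle, and the fallback of ``allotting each execution to all orders using the prefix structure'' is an assertion rather than an argument. In your design a single control qubit is measured once at the end, so each run yields one order. Covering all $j$ then costs $\sum_{j=2}^K j\cdot O(\log K/\epsmom^2)=O(K^2\log K/\epsmom^2)$ copies, a factor $K$ above the claim. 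The reversed $B$-cycle is also left unconstructed. The identity $\Tr[(\rho^{T_B})^j]=\Tr[(\rho^{T_A})^j]$ only changes which subsystem runs backwards. It does not fix the problem that the reversed swap product on $B$ needs the copies in the opposite order from the one in which they are loaded and released.

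The missing idea is to use a fresh ancilla at \emph{every} layer and let it \emph{select} between the two subsystem swaps, rather than switch a full swap on and off. Prepare $\ket{+}$, apply $\ket{0}\!\bra{0}\otimes W_B+\ket{1}\!\bra{1}\otimes W_A$ to the storage copy and the freshly loaded copy, measure the ancilla in the $X$ basis to get $x_\ell$, and discard the fresh register. The Kraus operators are $M_+=(W_A+W_B)/2$ and $M_-=(W_B-W_A)/2$. The sign-weighted branch map is therefore $\mathcal{T}_+-\mathcal{T}_-=\tfrac12(\mathcal{A}_\rho+\mathcal{B}_\rho)$, with $\mathcal{A}_\rho(X)=\Tr_F[W_A(X\otimes\rho)W_B]$ and $\mathcal{B}_\rho(X)=\Tr_F[W_B(X\otimes\rho)W_A]$. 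Because $W_A$ acts on the ket side and $W_B$ on the bra side, these interference terms carry $A$ forward and $B$ backward; this is how the counter-propagating cycle is realized. An index computation gives $\mathcal{A}_\rho(\Theta_\ell)=\mathcal{B}_\rho(\Theta_\ell)=\Theta_{\ell+1}$ for $\Theta_\ell=((\rho^{T_B})^{\ell+1})^{T_B}$. The sign-weighted sum of unnormalized branch operators evolves under this linear map, and its trace is $\E[x_1\cdots x_\ell]$. Hence $\E[x_1\cdots x_{j-1}]=\Tr(\Theta_{j-1})=p_j$ for all $j$ simultaneously. This needs no non-demolition property, which is what your proposed prefix lemma aims for. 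The mid-circuit measurement does disturb the storage register, but only the parity-weighted sum of branches matters, and that sum follows the PT contraction exactly.
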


In the terminology of Theorem~\ref{thm:main_upper}, the active-memory cost is order-independent.
It does not grow with $K$.
It remains $2n+1=O(n)$ in the number of system qubits.

Our second main result is a universal converse.

\begin{proposition}[Universal minimax converse]
\label{prop:universal}
Fix $K\ge 2$.
There exists an absolute constant $c>0$ such that for every $\epsmom\in(0,c)$, any protocol that outputs $\widehat{p}_K$ satisfying
\begin{equation}
\Prob\!\left[\bigl|\widehat{p}_K-\Tr\!\bigl((\rho^{T_B})^K\bigr)\bigr|\le \epsmom\right]\ge \frac23
\quad\text{for all states }\rho
\end{equation}
must use
\begin{equation}
\Ntotal = \Omega\!\left(\frac{K}{\epsmom^2}\right)
\end{equation}
copies of $\rho$.
\end{proposition}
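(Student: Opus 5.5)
The plan is a two-point (Le Cam) argument with commuting states, which reduces the quantum problem to distinguishing two biased coins. The first observation is that any protocol in the copy-access model, however adaptive and with whatever mid-circuit measurements or resets, consumes $\Ntotal$ independent copies. It can therefore be simulated by a single joint measurement on $\rho^{\otimes \Ntotal}$, since active-memory constraints only restrict the protocol further. So it suffices to exhibit two states $\rho_0,\rho_1$ with two properties. First, $|p_K(\rho_0)-p_K(\rho_1)|>2\epsmom$, so that an estimator accurate to $\epsmom$ with probability $2/3$ yields a test distinguishing them with success probability $2/3$. Second, the fidelity $F(\rho_0,\rho_1)$ satisfies $1-F(\rho_0,\rho_1)=O(\epsmom^2/K)$.

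Given such a pair, Helstrom's bound says the test needs $\tfrac12\|\rho_0^{\otimes N}-\rho_1^{\otimes N}\|_1\ge 1/3$. The Fuchs--van de Graaf inequality bounds the left side by $\sqrt{1-F(\rho_0,\rho_1)^{2N}}$, using multiplicativity of fidelity under tensor products. Hence $F^{2N}\le 8/9$, which forces $N\ge c'/(1-F)=\Omega(K/\epsmom^2)$.

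For the construction, I take $\mathcal{H}_B$ in a fixed pure product state, say $\rho_\lambda=\sigma_\lambda\otimes|0\rangle\!\langle 0|_{\mathrm{rest}}$. Here $\sigma_\lambda=\mathrm{diag}(\lambda,1-\lambda)$ acts on one qubit of $A$, and all remaining qubits, including all of $B$, are in $|0\rangle$. Since $|0\rangle\!\langle0|^{T}=|0\rangle\!\langle0|$, we get $\rho_\lambda^{T_B}=\rho_\lambda$ and hence
\[
p_K(\rho_\lambda)=\lambda^K+(1-\lambda)^K .
\]
I then fix $\lambda_0=1-1/K$ and $\lambda_1=\lambda_0-\delta$ with $\delta=C\epsmom/K$ for a suitable absolute constant $C$.

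By the mean value theorem, $p_K(\rho_{\lambda_0})-p_K(\rho_{\lambda_1})=\delta\,[K\xi^{K-1}-K(1-\xi)^{K-1}]$ for some $\xi\in[\lambda_1,\lambda_0]$. For $\epsmom<c$ small we have $\delta\le 1/(2K)$, so $\xi\ge 1-3/(2K)$. Then $\xi^{K-1}\ge e^{-3/2}\cdot$const uniformly in $K\ge2$, while $(1-\xi)^{K-1}\le (3/(2K))^{K-1}$ is a smaller term. The case $K=2$ should be checked by hand: there $\lambda_0=1/2$, so I would instead shift $\lambda_0$ to a constant like $3/4$. Altogether the difference is at least $c''K\delta=c''C\epsmom>2\epsmom$ once $C$ is chosen large.

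Because the two states commute, their fidelity is the Bhattacharyya coefficient of $\mathrm{Bern}(1-\lambda_0)$ and $\mathrm{Bern}(1-\lambda_1)$. The standard estimate $1-F\le \delta^2/(8\min\{q,1-q\})$, taken over the interval of success probabilities $q$, gives $1-F=O(\delta^2 K)=O(\epsmom^2/K)$ since $1-\lambda\ge 1/K$.

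The only delicate step is balancing the two requirements through the choice of operating point. The sensitivity $\partial_\lambda p_K\approx K/e$ is large only when $1-\lambda\approx 1/K$. Conversely, the Hellinger distance for a shift $\delta$ is $\delta^2/(1-\lambda)$, which is also amplified there by a factor $K$. These two factors of $K$ must combine into a net $K/\epsmom^2$ rather than cancel: the sensitivity enters squared as $K^2$, against a single factor $K$ from the Hellinger term. I would verify this bookkeeping carefully, together with the constraint $\epsmom<c$ ensuring $\lambda_1\in[0,1]$ and the uniformity of constants in $K$.
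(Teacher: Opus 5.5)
Your proposal is correct and matches the paper's proof in substance: a two-point reduction on a commuting, product-diagonal family on which $T_B$ acts trivially, with $p_K=\lambda^K+(1-\lambda)^K$ probed at $\lambda=1-\Theta(1/K)$, followed by Helstrom. The only differences are that you bound distinguishability via Fuchs--van de Graaf and the Bhattacharyya coefficient where the paper uses Pinsker, KL additivity and a local quadratic KL bound, and that you perturb toward $\lambda=1/2$ where the paper perturbs toward $1$.

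One bookkeeping slip comes from that direction. At $K=3$, your stated choice $\delta\le 1/(2K)$ lets $\xi$ reach $1/2$, where $\partial_\lambda p_K=0$. There your two bounds $(1-3/(2K))^{K-1}$ and $(3/(2K))^{K-1}$ both equal $1/4$, so the lower bound on the gap degenerates to $0$. Shrinking $c$ to force, e.g., $\delta\le 1/(4K)$ fixes this and gives $\partial_\lambda p_K\ge K/6$ uniformly for $K\ge 3$.
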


This converse is valid in the worst-case/minimax sense over all input states.
Its proof reduces PT-moment estimation to ordinary moment estimation on a commuting separable family, so it shows that PT-moment estimation is at least as hard as ordinary moment estimation for uniformly valid estimators.

The third main result is PT-specific and no longer reducible to ordinary moments of $\rho$.

\begin{proposition}[PT-specific converse on an isospectral NPT family]
\label{prop:ptspecific}
Fix $K\ge 3$.
There exists an absolute constant $c>0$ such that for every $\epsmom\in(0,c)$, any protocol that is promised an input from the two-qubit pure-state family
\begin{equation}
\ket{\psi_\theta}=\cos\theta\,\ket{00}+\sin\theta\,\ket{11},
\qquad
\rho_\theta=\ket{\psi_\theta}\!\bra{\psi_\theta},
\end{equation}
with
\begin{equation}
\theta\in I_K:=\left[\frac{1}{5\sqrt{K}},\,\frac{1}{4\sqrt{K}}\right],
\end{equation}
and outputs $\widehat{p}_K$ satisfying
\begin{equation}
\Prob\!\left[\bigl|\widehat{p}_K-\Tr\!\bigl((\rho_\theta^{T_B})^K\bigr)\bigr|\le \epsmom\right]\ge \frac23
\quad\text{for all }\theta\in I_K
\end{equation}
must use
\begin{equation}
\Ntotal=\Omega\!\left(\frac{K}{\epsmom^2}\right)
\end{equation}
copies.
Moreover, for this family all ordinary moments satisfy $\Tr(\rho_\theta^m)=1$ for every integer $m\ge 1$.
The higher PT moments, in particular $p_K$ for $K\ge 3$, vary with $\theta$.
\end{proposition}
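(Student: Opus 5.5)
The plan is a two-point (Le Cam) argument on the pure family $\rho_\theta$. It uses three ingredients: the PT spectrum of $\ket{\psi_\theta}$ in closed form, a lower bound of order $\sqrt K$ on $|dp_K/d\theta|$ over $I_K$, and the fidelity of pure product states.

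\emph{PT spectrum.} First I compute $\rho_\theta^{T_B}$ directly. Write $c=\cos\theta$ and $s=\sin\theta$. In the basis $\{\ket{00},\ket{11}\}$ it is diagonal with entries $c^2,s^2$. On the span of $\{\ket{01},\ket{10}\}$ it equals $cs\,\sigma_x$. Hence its spectrum is $\{c^2,s^2,cs,-cs\}$, and
\begin{equation}
p_K(\theta)=\cos^{2K}\theta+\sin^{2K}\theta+\bigl(1+(-1)^K\bigr)(\cos\theta\sin\theta)^K .
\end{equation}
Since $\rho_\theta$ is pure, $\Tr(\rho_\theta^m)=1$ for all $m\ge1$. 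For $K\ge3$ and $\theta\in(0,\pi/4)$ the expression above is non-constant, which gives the "moreover" clause.

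\emph{Derivative bound.} The key analytic step is to show $|p_K'(\theta)|\ge c_0\sqrt K$ for all $\theta\in I_K$, with $c_0>0$ absolute. The dominant term is
\begin{equation}
\frac{d}{d\theta}\cos^{2K}\theta=-2K\cos^{2K-1}\theta\,\sin\theta .
\end{equation}
On $I_K$ we have $K\theta\in[\sqrt K/5,\sqrt K/4]$ and $\sin\theta\ge\tfrac{2}{\pi}\theta$. Also $\cos^{2K-1}\theta\ge(1-\theta^2/2)^{2K}\ge e^{-c'K\theta^2}\ge e^{-c'/16}$. So this term has magnitude at least a constant times $\sqrt K$.

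The two remaining terms have derivatives bounded by $2K\sin^{2K-1}\theta$ and $2K(cs)^{K-1}$ respectively. On $I_K$ these are at most $2K\,(4\sqrt K)^{-(K-1)}$. This is $O(1)$ and in fact decays superexponentially. I expect this uniformity over all $K\ge3$ to be the main nuisance. For large $K$ the error terms are clearly negligible against $\sqrt K$. For the smallest cases (e.g.\ $K=3$, where the bound is $1/16$ against a main term of about $0.6$) I would verify the gap explicitly, and absorb any finite range of $K$ into the constant $c_0$.

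\emph{Two-point choice.} By the mean value theorem, choose $\theta_0=1/(5\sqrt K)$ and $\theta_1=\theta_0+\Delta$ with $\Delta=3\epsmom/(c_0\sqrt K)$. Then $|p_K(\theta_0)-p_K(\theta_1)|>2\epsmom$. Staying inside $I_K$ requires $\Delta\le 1/(20\sqrt K)$, i.e.\ $\epsmom\le c_0/60$; this fixes the absolute constant $c$.

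A successful estimator, thresholded at the midpoint, distinguishes $\rho_{\theta_0}$ from $\rho_{\theta_1}$ with success probability $\ge2/3$. Any adaptive sequential protocol, including mid-circuit measurement and reset, consuming at most $N$ copies can be purified and deferred into a single POVM on $\rho^{\otimes N}$. If $N$ is random, I pad to its worst-case value. By Helstrom, $\tfrac12\|\rho_{\theta_0}^{\otimes N}-\rho_{\theta_1}^{\otimes N}\|_1\ge1/3$.

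\emph{Fidelity bound.} For pure states this trace distance equals $\sqrt{1-|\braket{\psi_{\theta_0}|\psi_{\theta_1}}|^{2N}}=\sqrt{1-\cos^{2N}\Delta}$. Using $\cos^{2N}\Delta\ge e^{-2N\Delta^2}$ for small $\Delta$, the condition forces $1-e^{-2N\Delta^2}\ge1/9$, hence $N\ge c''/\Delta^2=\Omega(K/\epsmom^2)$.
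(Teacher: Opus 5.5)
Your proposal is correct and takes essentially the same route as the paper. Both use the two-point construction with $\theta_0=1/(5\sqrt K)$ and $\Delta=\Theta(\epsmom/\sqrt K)$, and both get a $\Theta(\sqrt K)$ lower bound on $|p_K'|$ over $I_K$ from the $\cos^{2K}\theta$ term, with the other terms shown negligible. Both then finish with Helstrom and the pure-state formula $\sqrt{1-\cos^{2N}\Delta}$.

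The differences are cosmetic. You use $\cos^{2N}\Delta\ge e^{-2N\Delta^2}$ where the paper uses $1-a^N\le N(1-a)$, and you spell out the deferred-measurement reduction for adaptive protocols, which the paper leaves implicit. There is also a harmless slip: at $K=3$ your error bound $2K(4\sqrt K)^{-(K-1)}$ equals $1/8$, not $1/16$, which still leaves a wide margin against the main term, and for odd $K$ the cross term vanishes anyway.
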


\begin{remark}[Near-optimality and the remaining logarithmic gap]
\label{rem:gap}
Theorem~\ref{thm:main_upper} and Propositions~\ref{prop:universal}--\ref{prop:ptspecific} match in their dependence on $K$ and $\epsmom$ up to the factor $\log K$ in the upper bound.
The gap remains open.
In our proof of Theorem~\ref{thm:main_upper}, the logarithmic factor comes from simultaneous sup-norm control over the entire hierarchy $\{p_j\}_{j=2}^K$, implemented concretely via Hoeffding's inequality~\cite{hoeffding1963probability} and a union bound.
We view this more as a many-output uniform-control issue than as a generic feature of nonlinear moment estimation.
The closest adjacent simultaneous nonlinear-functional quantum reference~\cite{chen2025simultaneous} likewise does not provide a clean log-free worst-case theorem for state-uniform estimation of an entire nonlinear hierarchy, and broader statistical viewpoints point to the geometry of the index set as the relevant object~\cite{BartlMendelson2025,KontorovichPainsky2025}.
Our per-shot estimators $\widehat{v}_2,\dots,\widehat{v}_K$ are strongly correlated because they are cumulative products extracted from the same ancilla bitstring, but correlation alone does not collapse the state-uniform complexity of the full index set $\{2,\dots,K\}$ to the single-output scale.
By contrast, the present converses still control only the single moment $p_K$.
Extending them to several moment orders is structurally nontrivial, since the PT-moment hierarchy consists of coupled power sums of one partially transposed spectrum.
The moment orders are not independent coordinates.
\end{remark}

\section{Sequential Realization of Partial-Transpose Moments}
\label{sec:sequential}

The PT-moment identity
\begin{equation}
\Tr\!\bigl[(\rho_{AB}^{T_B})^j\bigr] = \Tr\!\bigl[\ptperm{j}\,\rho_{AB}^{\otimes j}\bigr]
\label{eq:ptperm_identity}
\end{equation}
converts the target into a multi-copy observable.
Here
\begin{equation}
\ptperm{j}=\SA{j}\otimes(\SB{j})^{-1}.
\end{equation}
The cyclic-shift convention is the one used in Appendix~\ref{app:pt_identity}: when inserted in a trace over $j$ copies, $\SA{j}$ produces the forward contraction over the $A$ indices, while $(\SB{j})^{-1}$ produces the inverse contraction over the $B$ indices.
Thus $\ptperm{j}$ realizes exactly the component contraction in \eqref{eq:trace_expansion_supp}.
Appendix~\ref{app:pt_identity} gives the detailed derivation of \eqref{eq:ptperm_identity}.
For $j>2$, the permutation $\ptperm{j}$ need not be Hermitian.
Nevertheless, $\rho_{AB}^{T_B}$ is Hermitian, so $p_j(\rho_{AB})$ is real, and the sequential $X$-basis recursion below produces a real $\{\pm1\}$-valued unbiased estimator for this quantity.

For ordinary state moments, qubit reuse targets a single forward cycle.
For PT moments, the observable is a counter-propagating permutation.
Subsystem $A$ propagates along the forward cycle, and subsystem $B$ propagates along the inverse cycle.
This routing pattern differs from the one used for ordinary purity moments.
That routing structure is illustrated in Fig.~\ref{fig:topology} and realized sequentially by the dynamic circuit of Fig.~\ref{fig:circuit}.

\begin{figure}[t]
\centering
\includegraphics[width=0.8\linewidth]{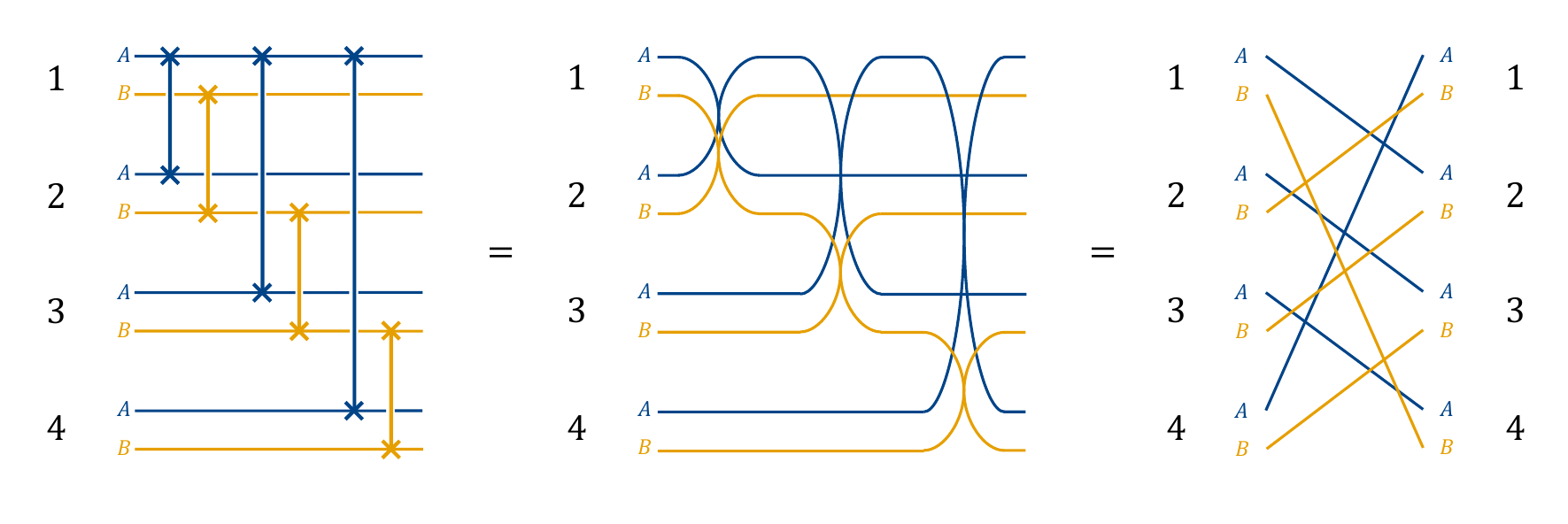}
\caption{Topological view of the PT permutation.
Subsystem $A$ follows the forward cycle and subsystem $B$ follows the inverse cycle.
The crossings are schematic routing events that are unfolded into a time-domain sequential implementation.}
\label{fig:topology}
\end{figure}

\begin{figure}[t]
\centering
\includegraphics[width=0.6\linewidth]{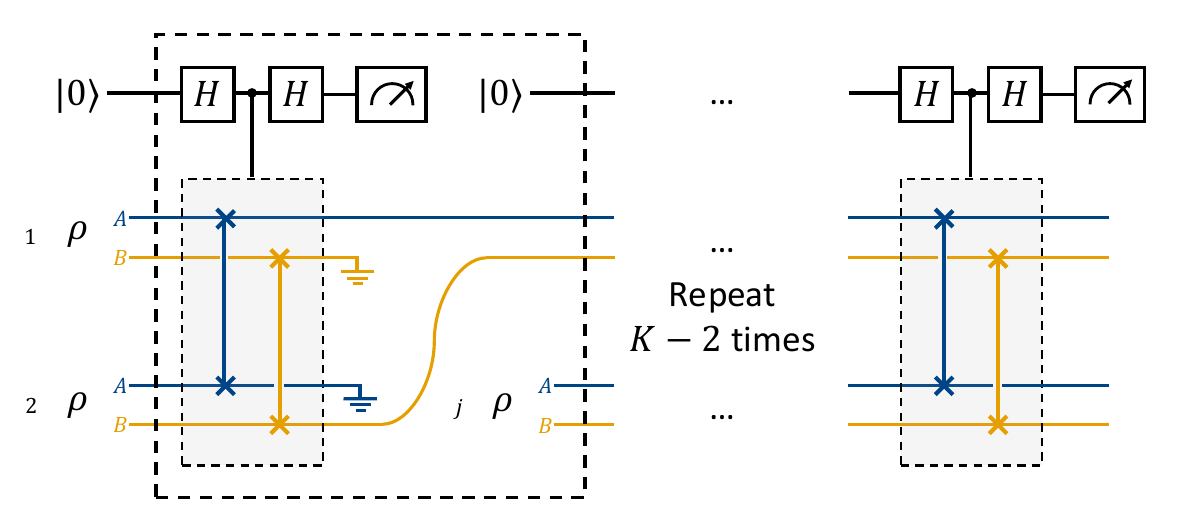}
\caption{Sequential qubit-reuse circuit for PT moments.
One storage copy and one fresh transient copy interact through an ancilla-controlled routing layer; the transient register is then reset and reloaded.
The active qubit count is always $2n+1$, independent of the target order $K$.}
\label{fig:circuit}
\end{figure}

The resulting protocol uses one ancilla, one storage register carrying a single bipartite copy, and one transient register into which a fresh copy is loaded at each layer.
Each depth-$K$ execution consumes $K$ total copies but never stores more than two copies simultaneously.
This order-independent active-memory feature distinguishes the protocol from standard coherent multi-copy implementations that require $K$ simultaneously present copies.

\section{Sequential Parity Estimator}
\label{sec:lemma}

The routing picture is operational; the exact estimator statement is the following lemma.

\begin{lemma}[Sequential parity estimator for PT moments]
\label{lem:seq}
Let one depth-$K$ execution of the recursive circuit output ancilla outcomes
\(
x_1,\dots,x_{K-1}\in\{\pm1\}.
\)
For each $j=2,\dots,K$, define the cumulative parity
\begin{equation}
\widehat{v}_j := \prod_{\ell=1}^{j-1} x_\ell.
\end{equation}
Then
\begin{equation}
\E[\widehat{v}_j]
=\Tr\!\bigl[\ptperm{j}\rho_{AB}^{\otimes j}\bigr]
=\Tr\!\bigl[(\rho_{AB}^{T_B})^j\bigr].
\end{equation}
\end{lemma}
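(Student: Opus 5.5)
We would prove Lemma~\ref{lem:seq} by induction on the layer index, tracking an \emph{operator-valued} signed expectation of the storage register instead of scalar probabilities. Fix the concrete form of the recursive circuit of Fig.~\ref{fig:circuit} as follows. At layer $\ell$ the storage register holds a (signed, unnormalized) operator on $\mathcal{H}_A\otimes\mathcal{H}_B$ and a fresh copy of $\rho_{AB}$ is loaded into the transient register. The ancilla is prepared in $\ket{+}$ and controls a routing unitary $W=\mathrm{SWAP}_A\otimes\mathbb{1}_B$ (or the analogous routing layer that moves the $A$ part forward and leaves the $B$ part to be contracted in the opposite direction). The ancilla is then measured in the $X$ basis, giving $x_\ell$, and the transient register is reset. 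The first step is to record the one-layer identity. If $\sigma$ denotes the joint storage--transient operator before the controlled routing, then
\begin{equation}
\E\bigl[x_\ell\,\cdot\,(\text{post-measurement operator})\bigr]=\tfrac12\bigl(W\sigma+\sigma W^{\dagger}\bigr).
\end{equation}
This follows directly from $\bra{\pm}\,\mathrm{c}W\,\ket{+}=\tfrac12(\mathbb{1}\pm W)$. Tracing out the reset transient register then defines a linear map $\Phi(X)=\Tr_{\mathrm{tr}}\bigl[\tfrac12(W(X\otimes\rho)+(X\otimes\rho)W^{\dagger})\bigr]$ acting on the storage register.

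The second step is the induction. Define $M_1=\rho_{AB}$ (the first stored copy) and $M_{j}=\E[\widehat{v}_j\,\cdot\,\text{storage state after layer } j-1]$. Since $\widehat{v}_{j+1}=\widehat{v}_j x_j$ and each fresh copy is independent of the past, the tower property together with the one-layer identity gives $M_{j+1}=\Phi(M_j)$. A final trace (or final contraction layer) yields $\E[\widehat{v}_j]=\Tr[\,\text{final contraction of } M_j\,]$. Expanding the recursion produces a sum of $2^{j-1}$ terms. Each term is a product of the routing operators, with $W$ or $W^{\dagger}$ chosen at each layer, applied to $\rho^{\otimes j}$ and contracted. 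We would evaluate each term in index notation, as in Appendix~\ref{app:pt_identity}. The $A$ indices are chained forward through the successive swaps. Because $B$ is never routed, the $B$ indices are contracted by the storage/trace structure in the reversed order. Each term therefore equals $\Tr[\ptperm{j}\rho^{\otimes j}]$ or its complex conjugate, depending on the orientation pattern.

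The main obstacle is this last bookkeeping step: showing that every mixed $W/W^{\dagger}$ pattern collapses to the same value, and not to some other permutation moment. Our plan is to show that each pattern corresponds to a counter-propagating $j$-cycle up to relabeling of copies, possibly with both orientations reversed. We would then use that reversing both orientations gives $\Tr[((\rho^{T_B})^{T})^j]=\overline{\Tr[(\rho^{T_B})^j]}$, which equals $p_j$ because $\rho^{T_B}$ is Hermitian, so that $p_j$ is real. If the mixed patterns turn out not to be pure relabelings, the fallback is to redesign the routing so that $W$ is Hermitian, e.g.\ a swap of the full $A$ registers composed with an appropriate $B$ exchange. Then $W=W^{\dagger}$, the symmetrization is trivial, and $\Phi$ becomes a single completely positive contraction whose $(j-1)$-fold iterate is read off directly. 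Finally, the identity $\Tr[\ptperm{j}\rho^{\otimes j}]=\Tr[(\rho^{T_B})^j]$ is \eqref{eq:ptperm_identity}, which we take from Appendix~\ref{app:pt_identity}.
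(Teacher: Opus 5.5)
\textbf{The concrete layer you fix estimates $\Tr(\rho_A^j)$, not the PT moment.} Your signed-expectation scaffolding is sound. The one-layer identity $\sum_x x\,M_x\sigma M_x^\dagger=\tfrac12(W\sigma+\sigma W^\dagger)$, the tower-property induction and the final trace are exactly the paper's $\Omega_\ell$ bookkeeping. The problem is the circuit itself.

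Take any layer $\ket{0}\!\bra{0}\otimes I+\ket{1}\!\bra{1}\otimes W$ followed by discarding the transient register. Then $\Tr_F[W(X\otimes\rho)]=LX$ with $L:=\Tr_F[W(I\otimes\rho)]$. So the parity-weighted update is $\Phi(X)=\tfrac12(LX+XL^\dagger)$, and in each term one side of the storage operator is left untouched.

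For $W=W_A=\mathrm{SWAP}_{S_A,F_A}\otimes I_{S_BF_B}$ one gets $L=\rho_A\otimes I_B$ with $\rho_A=\Tr_B\rho$. Hence $\E[\widehat{v}_j]=\Tr(\rho_A^j)$, the R\'enyi moment of the reduced state. This fails already at $j=2$: for the Bell state $\rho_{\pi/4}$, $\Tr(\rho_A^2)=1/2$, whereas $p_2=\Tr\rho^2=1$.

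The step that breaks is ``because $B$ is never routed, the $B$ indices are contracted in the reversed order.'' An unrouted $B$ factor of each fresh copy is simply traced against itself when the transient register is reset. So the obstacle you single out (mixed $W/W^\dagger$ patterns) is not the real one. Your $W$ is already Hermitian and all patterns collapse, just to the wrong number.

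For the same reason the Hermitian fallback cannot rescue the argument. A controlled full swap gives $L=\rho$ and $\E[\widehat v_j]=\Tr\rho^j$. That is identically $1$ on the paper's family $\rho_\theta$, while $p_3(\rho_\theta)=\cos^6\theta+\sin^6\theta$ varies. Hermiticity also does not make $\tfrac12(W\sigma+\sigma W)$ trivial or completely positive.

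\textbf{What is missing.} Both ancilla branches must route, in complementary ways. The paper's layer is $U_{\mathrm{layer}}=\ket{0}\!\bra{0}\otimes W_B+\ket{1}\!\bra{1}\otimes W_A$. Equivalently, it is a controlled full swap $W_AW_B$ followed by an \emph{unconditional} $W_B$ before the transient register is discarded. That conjugation is exactly what your one-sided Hadamard-test layer lacks. Its parity-weighted update is
\[
\Omega_{\ell+1}=\tfrac12\Tr_F\!\bigl[W_A(\Omega_\ell\otimes\rho)W_B+W_B(\Omega_\ell\otimes\rho)W_A\bigr].
\]
In each cross term, one side of the storage operator is refreshed through the fresh copy's $A$ factor and the other side through its $B$ factor. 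This is precisely the forward-$A$/inverse-$B$ contraction.

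The identification then needs no $2^{j-1}$-term orientation or conjugation argument. With $R=\rho^{T_B}$, the target family is $\Theta_\ell=(R^{\ell+1})^{T_B}$. A single index computation shows that \emph{each} cross term separately maps $(R^{\ell+1})^{T_B}$ to $(R^{\ell+2})^{T_B}$. Since $\Omega_0=\rho=R^{T_B}$, induction gives $\Omega_\ell=(R^{\ell+1})^{T_B}$, and therefore $\E[\widehat v_j]=\Tr\Omega_{j-1}=\Tr R^j$.
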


The full proof is deferred to Appendix~\ref{app:full_lemma}, but the main text records the three operator-level moves that make the estimator work.
The proof can be organized into three steps.

\smallskip\noindent\textit{Step 1. One-layer signed recursion.}
In the first step, we isolate the exact update induced on the storage register by one ancilla-controlled routing layer.
Let the storage register be $S=S_A\otimes S_B$ and the fresh register be $F=F_A\otimes F_B$.
Define the subsystem swaps
\begin{equation}
\begin{aligned}
W_A&:=\mathrm{SWAP}_{S_A,F_A}\otimes I_{S_BF_B},\\
W_B&:=I_{S_AF_A}\otimes \mathrm{SWAP}_{S_B,F_B}.
\end{aligned}
\label{eq:main_swap_defs}
\end{equation}
One recursive layer acts through the exact ancilla-controlled unitary
\begin{equation}
U_{\mathrm{layer}}
=\ket0\!\bra0\otimes W_B + \ket1\!\bra1\otimes W_A.
\label{eq:ulayer}
\end{equation}
Measuring the ancilla in the $X$ basis gives Kraus operators
\begin{equation}
M_+=\frac{W_A+W_B}{2},
\qquad
M_-=\frac{W_B-W_A}{2}.
\label{eq:kraus}
\end{equation}
For an operator $X$ on the storage register, define the branch maps
\begin{equation}
\mathcal{T}_x(X):=\Tr_F\!\left[M_x\,(X\otimes \rho_{AB})\,M_x^\dagger\right],
\qquad x\in\{+1,-1\}.
\end{equation}
The parity-weighted storage recursion is then
\begin{equation}
\Omega_0=\rho_{AB},
\qquad
\Omega_{\ell+1}:=\sum_{x=\pm1} x\,\mathcal{T}_x(\Omega_\ell),
\end{equation}
and Appendix~\ref{app:full_lemma} proves that this closes exactly as
\begin{equation}
\Omega_{\ell+1}
=\frac12\Tr_F\!\left[
W_A(\Omega_\ell\otimes \rho_{AB})W_B
+
W_B(\Omega_\ell\otimes \rho_{AB})W_A
\right].
\label{eq:omega_recursion}
\end{equation}

\smallskip\noindent\textit{Step 2. Identification with the PT-permutation family.}
In the second step, we show that the resulting recursion matches the contraction pattern generated by the target PT permutation.
Define
\begin{equation}
\Theta_\ell
:=
\Tr_{2,\dots,\ell+1}\!\left[
\SA{\ell+1}\rho_{AB}^{\otimes(\ell+1)}(\SB{\ell+1})^{-1}
\right].
\label{eq:theta_main}
\end{equation}
The appendix shows that $\Theta_\ell$ obeys the same initial condition and the same one-layer recursion as $\Omega_\ell$.
Equivalently, the sequential routing rule and the PT permutation induce the same contraction pattern at every recursion depth.
Thus
\begin{equation}
\Omega_\ell=\Theta_\ell
\qquad\text{for all }\ell\ge 0.
\label{eq:omega_theta_main}
\end{equation}

\smallskip\noindent\textit{Step 3. From the recursion to an unbiased estimator.}
In the final step, we convert the operator recursion into the cumulative-parity expectation formula.
For each outcome history $(x_1,\dots,x_\ell)$, let $\sigma_{x_1,\dots,x_\ell}$ denote the corresponding unnormalized storage operator.
Appendix~\ref{app:full_lemma} proves the explicit parity bookkeeping identity
\begin{equation}
\Omega_\ell
=
\sum_{x_1,\dots,x_\ell}
\left(\prod_{r=1}^{\ell}x_r\right)\sigma_{x_1,\dots,x_\ell},
\end{equation}
which immediately gives
\begin{equation}
\Tr(\Omega_\ell)=\E\!\left[\prod_{r=1}^{\ell}x_r\right].
\end{equation}
Setting $\ell=j-1$ and combining this with \eqref{eq:omega_theta_main} yields
\begin{equation}
\E[\widehat{v}_j]
=
\Tr\!\bigl[\ptperm{j}\rho_{AB}^{\otimes j}\bigr],
\end{equation}
and Appendix~\ref{app:pt_identity} identifies this quantity with $\Tr[(\rho_{AB}^{T_B})^j]$.

\section{Achievability: Simultaneous Estimation with Order-Independent Active Memory}
\label{sec:upper}

We now derive Theorem~\ref{thm:main_upper}.
The active-memory count is immediate from the circuit architecture: at all times the protocol stores one ancilla and two $n$-qubit system registers, so it never uses more than $2n+1$ active qubits.
The more interesting point is that one depth-$K$ execution does \emph{not} estimate only one target moment.
Instead, it produces a single bitstring $x_1,\dots,x_{K-1}$, and that one bitstring simultaneously induces the cumulative-parity estimators
\begin{equation}
\widehat{v}_2=x_1,\quad
\widehat{v}_3=x_1x_2,\quad
\dots,\quad
\widehat{v}_K=x_1x_2\cdots x_{K-1}.
\end{equation}
Thus a single dataset of depth-$K$ executions can be reused across all moment orders.

The performance target in Theorem~\ref{thm:main_upper} is a uniform guarantee in the sup norm,
\begin{equation}
\max_{2\le j\le K} |\widehat{p}_j-p_j|\le \epsmom,
\end{equation}
that is, a maximum-error guarantee over all $j=2,\dots,K$.
This distinction matters: the protocol naturally yields all moments at once, and the proof must control all of them at once.
That is exactly where the current $\log K$ factor comes from.

\begin{proof}[Proof of Theorem~\ref{thm:main_upper}]
Let $\widehat{v}_j^{(s)}$ denote the estimator for $p_j$ obtained from the $s$-th independent depth-$K$ execution, and define the empirical averages
\begin{equation}
\widehat{p}_j:=\frac{1}{\Mshot}\sum_{s=1}^{\Mshot}\widehat{v}_j^{(s)},
\qquad
j=2,\dots,K.
\end{equation}
By Lemma~\ref{lem:seq}, each $\widehat{v}_j^{(s)}$ is unbiased for $p_j$, and by construction $\widehat{v}_j^{(s)}\in\{\pm1\}$.
Hence Hoeffding's inequality~\cite{hoeffding1963probability} gives
\begin{equation}
\Prob\!\left[
\left|\frac{1}{\Mshot}\sum_{s=1}^{\Mshot}\widehat{v}_j^{(s)}-p_j\right|>\epsmom
\right]
\le 2\exp\!\left(-\frac{\Mshot \epsmom^2}{2}\right)
\end{equation}
for each fixed $j$.
To obtain a simultaneous sup-norm guarantee, we apply a union bound over all $j=2,\dots,K$:
\begin{equation}
\begin{split}
\Prob\!\left[
\max_{2\le j\le K} |\widehat{p}_j-p_j|>\epsmom
\right]
&\le
2(K-1)\exp\!\left(-\frac{\Mshot \epsmom^2}{2}\right).
\end{split}
\end{equation}
It therefore suffices to choose
\begin{equation}
\Mshot \ge \frac{2}{\epsmom^2}\log(6K),
\end{equation}
which implies
\begin{equation}
\Prob\!\left[
\max_{2\le j\le K} |\widehat{p}_j-p_j|\le \epsmom
\right]\ge \frac23.
\end{equation}
Each depth-$K$ execution uses exactly $K$ copies, so
\begin{equation}
\Ntotal = K\Mshot = O\!\left(\frac{K\log K}{\epsmom^2}\right).
\end{equation}
This proves Theorem~\ref{thm:main_upper}.
\end{proof}

The proof leaves two takeaways.
The logarithmic factor comes from the uniform all-moments guarantee.
It does not come from the sequential architecture or the active-memory accounting.
What remains open is whether sharper simultaneous concentration, or a matching simultaneous converse, can remove or justify that overhead.

\section{Converse Results}
\label{sec:converse}

\subsection{Universal converse}

Proposition~\ref{prop:universal} is a worst-case/minimax statement for unrestricted estimators.
Its proof, given in Appendix~\ref{app:universal}, uses a commuting separable family for which partial transpose is trivial.
Thus the argument is universal.
It proves that PT-moment estimation is at least as hard as ordinary moment estimation for estimators that must work uniformly on all states.

\subsection{PT-specific converse}

We now explain why Proposition~\ref{prop:ptspecific} is genuinely PT-specific.
Consider the two-qubit pure-state family
\begin{equation}
\ket{\psi_\theta}=\cos\theta\,\ket{00}+\sin\theta\,\ket{11},
\qquad
\rho_\theta=\ket{\psi_\theta}\!\bra{\psi_\theta}.
\end{equation}
Every $\rho_\theta$ has the same global spectrum, namely $\{1,0,0,0\}$, and therefore
\begin{equation}
\Tr(\rho_\theta^m)=1 \qquad \text{for all integers } m\ge 1.
\end{equation}
However,
\begin{equation}
\operatorname{spec}(\rho_\theta^{T_B})
=
\left\{\cos^2\theta,\ \sin^2\theta,\ \sin\theta\cos\theta,\ -\sin\theta\cos\theta\right\},
\label{eq:ptspectrum}
\end{equation}
so the higher PT moments vary with $\theta$.
More explicitly,
\begin{equation}
\Tr\!\bigl[(\rho_\theta^{T_B})^K\bigr]
=
\begin{cases}
\cos^{2K}\theta+\sin^{2K}\theta,
& K \text{ odd},\\
\cos^{2K}\theta+\sin^{2K}\theta+2(\sin\theta\cos\theta)^K,
& K \text{ even}.
\end{cases}
\label{eq:pt_family_formula}
\end{equation}

A key intermediate estimate, proved in Appendix~\ref{app:pt_specific}, is the explicit bound
\begin{equation}
\left|
\frac{d}{d\theta}\Tr\!\bigl[(\rho_\theta^{T_B})^K\bigr]
\right|
\ge \frac{1}{8}\sqrt{K}
\label{eq:derivative_main}
\end{equation}
for every $K\ge 3$ and every $\theta\in I_K$.
The appendix obtains this by isolating the dominant derivative contribution from $\cos^{2K}\theta+\sin^{2K}\theta$ and showing that the even-$K$ cross term is uniformly smaller on the window $I_K$.
Thus a parameter separation
\begin{equation}
|\theta_1-\theta_0| = \Theta\!\left(\frac{\epsmom}{\sqrt{K}}\right)
\end{equation}
already produces a PT-moment gap of order $\epsmom$ by the mean-value theorem.
At the same time,
\begin{equation}
\bigl|\braket{\psi_{\theta_0}}{\psi_{\theta_1}}\bigr|=\cos(\theta_1-\theta_0),
\end{equation}
so distinguishing $\rho_{\theta_0}^{\otimes N}$ from $\rho_{\theta_1}^{\otimes N}$ is exactly a pure-state discrimination problem.
Combining the Helstrom bound~\cite{Helstrom1976} with the fidelity formula for tensor powers~\cite{Watrous2018} yields
\begin{equation}
\Ntotal = \Omega\!\left(\frac{1}{|\theta_1-\theta_0|^2}\right)
=\Omega\!\left(\frac{K}{\epsmom^2}\right).
\end{equation}
Conceptually, the key point is the identification of an explicit NPT hard family whose ordinary moments are frozen while its PT moments vary.
This isolates genuinely PT-specific hardness.
Appendix~\ref{app:pt_specific} gives the full proof.

\begin{remark}[Why Proposition~\ref{prop:ptspecific} is genuinely PT-specific]
\label{rem:ptspecificity}
The hard family in Proposition~\ref{prop:ptspecific} is isospectral for the ordinary moments of $\rho_\theta$.
All quantities $\Tr(\rho_\theta^m)$ are constant, so the parameter is invisible to ordinary spectral-moment estimation.
The converse is therefore driven entirely by the PT spectrum and not by standard moment variation.
\end{remark}

\begin{remark}[Why $K=2$ is excluded]
\label{rem:k2}
For every pure state $\rho$,
\begin{equation}
\Tr\!\bigl((\rho^{T_B})^2\bigr)=\Tr(\rho^2)=1.
\end{equation}
Hence the family $\{\rho_\theta\}$ carries no parameter dependence in the second PT moment, and no lower bound of the above type can be extracted at $K=2$.
\end{remark}

\section{Consequences for PT-Based Functionals}
\label{sec:functionals}

The theorem-level task in this paper is PT-moment acquisition.
This section records downstream consequences for PT-based functionals.
For nonsmooth PT-based quantities, acquisition complexity and reconstruction complexity are separate layers of the problem.
Nevertheless, the moment hierarchy naturally feeds downstream tasks.
For example, the negativity
\begin{equation}
\mathcal{N}(\rho_{AB})=\frac{\|\rho_{AB}^{T_B}\|_1-1}{2}
\end{equation}
is a function of the PT spectrum~\cite{VidalWerner2002}, and recent work continues to relate a small number of PT moments to entanglement certification and quantification tasks~\cite{Yu2021PTMoments,MillerEisert2026,TarabungaHaug2025PTRealignment}.
Accordingly, low-moment certification and PT-moment acquisition are distinct questions.

This distinction also clarifies the operational relationship to Ref.~\cite{MillerEisert2026}.
If a downstream certification task depends only on a constant-size subset of PT moments, say $\{p_{j_1},\dots,p_{j_s}\}$ with $s=O(1)$, then one may run the same depth-$m$ qubit-reuse circuit with $m=\max\{j_1,\dots,j_s\}$ and simply retain those target coordinates in the classical output.
The same Hoeffding analysis specialized to $s=O(1)$ outputs then gives total copy complexity $O(m/\epsmom^2)$ up to constants, since the full-hierarchy simultaneous overhead is no longer the relevant cost driver.
Ref.~\cite{MillerEisert2026} addresses which PT moments suffice for certification or few-moment detection.
The present paper addresses the cost of obtaining PT moments under qubit-limited access.
Together the two viewpoints can be turned into end-to-end resource statements on structured state families.

\subsection{Classical reconstruction and representation instability}

One direct route is to approximate the nonsmooth function $|x|$ by a polynomial expressed in the monomial basis and then plug in the measured PT moments.
Write
\begin{equation}
M_0:=\rho_{AB}^{T_B},
\qquad
\Lambda\ge \|M_0\|_\infty,
\qquad
M:=M_0/\Lambda.
\end{equation}
Then $\|M\|_\infty\le 1$ and
\begin{equation}
q_j:=\Tr(M^j)=\Lambda^{-j}\Tr(M_0^j)=\Lambda^{-j}p_j.
\end{equation}
Given PT-moment estimates $\widehat{p}_j$, define $\widehat{q}_j:=\Lambda^{-j}\widehat{p}_j$ and
\begin{equation}
\widehat{S}:=\sum_{j=2}^K c_j \widehat{q}_j
=
\sum_{j=2}^K c_j\Lambda^{-j}\widehat{p}_j.
\end{equation}
The resulting negativity-oriented estimator is
\begin{equation}
\widehat{\mathcal N}:=\frac{\Lambda\widehat{S}-1}{2}.
\end{equation}
Accordingly,
\begin{equation}
\begin{split}
\bigl|\widehat{\mathcal N}-\mathcal N(\rho)\bigr|
&\le
\frac{\Lambda}{2}\bigl|\Tr|M|-\Tr P_K(M)\bigr|+
\frac{\Lambda}{2}\bigl|\Tr P_K(M)-\widehat{S}\bigr|.
\end{split}
\end{equation}
Thus negativity error $\varepsilon_N$ reduces to trace-level control at accuracy $\eta=2\varepsilon_N/\Lambda$ for the rescaled operator.
Appendix~\ref{app:downstream} records an explicit $\erfop$--Taylor construction of this type.
The resulting formulas make clear that negativity reconstruction is further controlled by the coefficient $\ell_1$ weight of the chosen polynomial representation.
For the monomial route analyzed here, that coefficient weight can be large, which leads to poor downstream sampling complexity even though the PT moments themselves are obtained efficiently.

\subsection{Coherent PT-LCSP as a modular downstream route}

A second route is to encode a PT-based polynomial functional coherently through a PT-adapted linear-combination-of-state-powers (PT-LCSP) construction.
This route is structurally appealing because it treats the PT moments as a modular primitive inside a larger coherent estimator.
Appendix~\ref{app:downstream} gives a precise PT-LCSP implementation statement.
However, the same coefficient-weight parameter controls its resource cost.
Equivalently, the coherent polynomial may be viewed as acting on the rescaled operator $M=M_0/\Lambda$, or as using coefficients $c_j\Lambda^{-j}$ against the raw PT moments.
This PT-LCSP construction is a coherent $K$-copy implementation of the polynomial functional.
Its efficiency is governed by the coefficient weight of the chosen rescaled representation.
The realization model changes, and the same representation-instability bottleneck remains.

\section{Discussion and Open Problems}
\label{sec:discussion}

We have formulated simultaneous PT-moment estimation as a copy-complexity problem with an explicit active-memory constraint and proved matching upper and lower bounds up to a logarithmic factor.
The core technical contribution is the sequential realization of the PT permutation under qubit reuse, together with a superoperator-level induction argument showing that cumulative ancilla parities are unbiased for the full PT-moment hierarchy.
On the converse side, the universal minimax lower bound and the PT-specific lower bound on an explicit isospectral NPT family show that the $K/\epsmom^2$ scaling already appears on PT-specific instances.

\smallskip\noindent\textit{The remaining logarithmic gap.}
The current upper bound is
\(
O(K\log K/\epsmom^2),
\)
where the logarithmic factor arises concretely from simultaneous sup-norm control over all moment orders $j=2,\dots,K$ via Hoeffding's inequality~\cite{hoeffding1963probability} and a union bound.
We do not know whether a literal union bound is sharp.
Our current reading is narrower: the extra factor is more plausibly tied to uniform control of many nonlinear outputs over all input states than to the cost of estimating one polynomial functional or one fixed moment order.
This is consistent with the closest adjacent quantum simultaneous-estimation reference~\cite{chen2025simultaneous}, which does not provide a clean log-free worst-case hierarchy theorem in a comparable setting, and with broader statistical viewpoints in which the sharp uniform-estimation cost depends on the geometry of the index set~\cite{BartlMendelson2025,KontorovichPainsky2025}.
Our coordinates $\widehat{v}_2,\dots,\widehat{v}_K$ are strongly correlated because they are cumulative products extracted from the same ancilla bitstring, but those correlations do not by themselves show that the worst-case uniform complexity of the full hierarchy collapses to the single-output scale.
By contrast, the present converse bounds are lower bounds for estimating the single moment $p_K$.
They imply near-optimality of the simultaneous estimator only because simultaneous estimation contains estimation of $p_K$ as a special case.
Moreover, the hierarchy coordinates are not independent parameters: they are coupled power sums of the same partially transposed spectrum.
For that reason, a simultaneous converse for several moment orders is not a routine iteration of the current single-moment testing proof.
What remains open is a genuinely simultaneous hierarchy lower bound that either removes the upper-bound logarithm or proves that some logarithmic overhead is intrinsic in relevant restricted models.

\smallskip\noindent\textit{PT-specific hardness and its current scope.}
Proposition~\ref{prop:ptspecific} shows that the same $\Omega(K/\epsmom^2)$ scaling already appears on an explicit NPT family whose ordinary moments are completely uninformative.
This isolates PT-specific hardness at the single-moment level.
A natural next step is to extend this perspective to richer hard-instance families and to simultaneous PT-moment lower bounds.

\smallskip\noindent\textit{Restricted models and active-memory trade-offs.}
The unrestricted minimax problem is only one viewpoint.
Restricted settings---for example local, unentangled, shadow-like, or bounded-memory variants beyond the present sequential architecture---may exhibit additional lower-bound phenomena even when the unrestricted copy complexity is already nearly characterized.
Understanding these variants would help clarify whether the present qubit-reuse construction is merely one efficient implementation, or whether active-memory constraints fundamentally shape the attainable trade-offs.

Taken together, the paper gives a sequential protocol with active memory independent of the target order and characterizes the acquisition cost up to a logarithmic factor.
Once acquisition is separated from downstream certification and reconstruction, questions that are often mixed together can be analyzed on cleaner and more useful layers.

\section*{Acknowledgement}
We thank You Zhou for helpful discussions. This work is supported by the Quantum Science and Technology--National Science and Technology Major Project (Grant No.~2023ZD0300200), the Beijing Natural Science Foundation (Grant No.~Z250004), NSAF (Grant No.~U2330201), the National Natural Science Foundation of China (Grant No.~12361161602), and the Beijing Science and Technology Planning Project (Grant No.~Z25110100810000). The numerical experiments of this work are supported by the High-performance Computing Platform of Peking University.

\appendices
\onecolumn

\section{Derivation of the PT Moment Identity}
\label{app:pt_identity}

In this appendix, we derive
\begin{equation}
\Tr\!\bigl[(\rho_{AB}^{T_B})^k\bigr]
=
\Tr\!\left[\bigl(\SA{k}\otimes(\SB{k})^{-1}\bigr)\rho_{AB}^{\otimes k}\right].
\label{eq:ptperm_identity_supp}
\end{equation}
To compare the two sides of \eqref{eq:ptperm_identity_supp}, it is enough to expand both in a product basis and inspect how the indices are contracted.
This keeps the derivation short and makes the action of the PT permutation explicit.

We write
\begin{equation}
\rho_{AB}=\sum_{i,j}\sum_{\mu,\nu}\rho_{i\mu,j\nu}\,
\ket{i}\!\bra{j}_A\otimes\ket{\mu}\!\bra{\nu}_B.
\end{equation}
Then the partial transpose with respect to subsystem $B$ is
\begin{equation}
\rho_{AB}^{T_B}
=
\sum_{i,j}\sum_{\mu,\nu}\rho_{i\nu,j\mu}\,
\ket{i}\!\bra{j}_A\otimes\ket{\mu}\!\bra{\nu}_B.
\end{equation}
Consequently, this gives
\begin{equation}
\label{eq:trace_expansion_supp}
\begin{split}
\Tr\!\bigl[(\rho_{AB}^{T_B})^k\bigr]
=
\sum_{\substack{i_1,\dots,i_k\\ \mu_1,\dots,\mu_k}}
\rho_{i_1\mu_2,i_2\mu_1}
\rho_{i_2\mu_3,i_3\mu_2}
\cdots
\rho_{i_k\mu_1,i_1\mu_k}.
\end{split}
\end{equation}
The $A$ indices are contracted along the forward cycle
\(
i_1\to i_2\to \cdots \to i_k \to i_1,
\)
while the $B$ indices are contracted along the inverse cycle
\(
\mu_1\leftarrow \mu_2\leftarrow \cdots \leftarrow \mu_k \leftarrow \mu_1.
\)
These are exactly the forward contractions on subsystem $A$ and the inverse contractions on subsystem $B$ implemented by the PT permutation.

Evaluating the multi-copy observable directly gives
\begin{equation}
\begin{split}
\Tr\!\left[\bigl(\SA{k}\otimes(\SB{k})^{-1}\bigr)\rho_{AB}^{\otimes k}\right]
=
\sum_{\substack{i_1,\dots,i_k\\ \mu_1,\dots,\mu_k}}
\rho_{i_1\mu_2,i_2\mu_1}
\rho_{i_2\mu_3,i_3\mu_2}
\cdots
\rho_{i_k\mu_1,i_1\mu_k},
\end{split}
\end{equation}
The two expressions coincide term by term, so the PT-permutation expectation reproduces the PT moment.
This is exactly the PT-moment identity in \eqref{eq:ptperm_identity_supp}.

\section{Full Proof of Lemma~\ref{lem:seq}}
\label{app:full_lemma}

For convenience, we first record the complete protocol in pseudocode form.

\begin{algorithm}[H]
\caption{Simultaneous Estimation of PT Moments with Optional Downstream Post-Processing}
\label{alg:pt_estimation}
\begin{algorithmic}[1]
\Require Copy access to $\rho_{AB}$, target moment accuracy $\epsmom$, maximal moment order $K$, and optionally coefficients $\{c_j\}_{j=2}^K$ together with a scale $\Lambda\ge \|\rho_{AB}^{T_B}\|_\infty$ for rescaled PT-based post-processing.
\Ensure Estimates $\{\widehat{p}_j\}_{j=2}^K$ and, optionally, a downstream PT-based functional estimate.
\State Choose the number of circuit executions $\Mshot = O(\log K/\epsmom^2)$.
\State Initialize accumulators $S_2,\dots,S_K\leftarrow 0$.
\For{$s=1$ to $\Mshot$}
    \State Execute the depth-$K$ sequential circuit.
    \State Within this execution, re-prepare the ancilla in $\ket{+}$ at each recursive layer and reload the transient register with a fresh copy of $\rho_{AB}$.
    \State Record ancilla outcomes $x_1^{(s)},\dots,x_{K-1}^{(s)}\in\{\pm1\}$.
    \State $v\leftarrow 1$.
    \For{$\ell=1$ to $K-1$}
        \State $v\leftarrow v\cdot x_\ell^{(s)}$.
        \State $S_{\ell+1}\leftarrow S_{\ell+1}+v$.
    \EndFor
\EndFor
\For{$j=2$ to $K$}
    \State $\widehat{p}_j \leftarrow S_j/\Mshot$.
\EndFor
\If{a polynomial representation $\sum_{j=2}^K c_j x^j$ is requested}
    \For{$j=2$ to $K$}
        \State $\widehat{q}_j \leftarrow \Lambda^{-j}\widehat{p}_j$.
    \EndFor
    \State Form the classical post-processed estimate $\widehat{S}\leftarrow \sum_{j=2}^K c_j \widehat{q}_j$.
    \State If a negativity estimate is requested, set $\widehat{\mathcal N}\leftarrow (\Lambda\widehat{S}-1)/2$.
\EndIf
\State \Return $\{\widehat{p}_j\}_{j=2}^K$ and any requested downstream post-processing output.
\end{algorithmic}
\end{algorithm}

\begin{proof}[Proof of Lemma~\ref{lem:seq}]
We follow the same three-step structure as in the main text.
The proof first derives the one-layer storage recursion, then identifies the corresponding PT-permutation family, and finally takes the trace to recover the estimator expectation.
Let the one-copy Hilbert space be
\begin{equation*}
\mathcal{H}=\mathcal{H}_A\otimes\mathcal{H}_B,
\qquad
\rho:=\rho_{AB}.
\end{equation*}
At one recursive layer, the storage register is
\(
S=S_A\otimes S_B
\)
and the fresh transient register is
\(
F=F_A\otimes F_B.
\)
Define the subsystem-swap operators on $S\otimes F$ by
\begin{equation}
W_A:=\mathrm{SWAP}_{S_A,F_A}\otimes I_{S_BF_B},
\qquad
W_B:=I_{S_AF_A}\otimes \mathrm{SWAP}_{S_B,F_B}.
\label{eq:swap_defs_supp}
\end{equation}
They satisfy
\begin{equation}
W_A^2=W_B^2=I_{SF},
\qquad
[W_A,W_B]=0.
\label{eq:swap_relations_supp}
\end{equation}
These two swaps encode the complementary routing of subsystems $A$ and $B$ through one recursive layer.

Let $Q$ denote the ancilla in the computational basis $\{\ket0,\ket1\}$.
The exact pre-measurement ancilla-controlled unitary of one recursive layer is
\begin{equation}
U_{\mathrm{layer}}
=
\ket0\!\bra0_Q\otimes W_B
+
\ket1\!\bra1_Q\otimes W_A.
\label{eq:layer_unitary_supp}
\end{equation}
This is the effective unitary implemented by the recursive circuit.
The ancilla-$\ket0$ branch contributes only $W_B$.
The ancilla-$\ket1$ branch contributes the full-copy swap $W_AW_B$ followed by the subsystem-$B$ swap, leaving $(W_AW_B)W_B=W_A$.

Let $\ket{x}_X$ denote the Pauli-$X$ eigenstate with eigenvalue $x\in\{+1,-1\}$, so that $\ket{+1}_X=\ket+$ and $\ket{-1}_X=\ket-$.
The ancilla is prepared in $\ket+_X$ and measured in the $X$ basis.
Define the ancilla-conditioned Kraus operators
\begin{equation}
M_x:={}_X\!\bra{x}\,U_{\mathrm{layer}}\,\ket+_X,
\qquad
x\in\{+1,-1\}.
\label{eq:kraus_def_supp}
\end{equation}
Equation~\eqref{eq:layer_unitary_supp} yields
\begin{equation}
M_{+1}=M_+=\frac{W_A+W_B}{2},
\qquad
M_{-1}=M_-=\frac{W_B-W_A}{2}.
\label{eq:kraus_pm_supp}
\end{equation}
Moreover,
\begin{equation}
\begin{split}
M_+^\dagger M_+ + M_-^\dagger M_-
&=
\frac{(W_A+W_B)^2+(W_B-W_A)^2}{4}\\
&=
\frac{2W_A^2+2W_B^2}{4}
=I_{SF}.
\end{split}
\end{equation}

We first package one measurement layer into the conditional branch maps that act on the storage register alone.
For each $x\in\{+1,-1\}$ and every operator $X$ on the storage register, we define
\begin{equation}
\mathcal{T}_x(X):=
\Tr_F\!\left[M_x\,(X\otimes \rho_F)\,M_x^\dagger\right],
\label{eq:branch_map_supp}
\end{equation}
where $\rho_F=\rho$ is the fresh copy loaded at that layer.
These maps describe the exact storage-register update conditioned on the ancilla outcome.
The next recursion collects the two measurement branches with the parity sign that later appears in the estimator.
Define
\begin{equation}
\Omega_0:=\rho,
\qquad
\Omega_{\ell+1}:=\sum_{x=\pm1}x\,\mathcal{T}_x(\Omega_\ell).
\label{eq:omega_rec_supp}
\end{equation}
This parity-weighted recursion is the operator counterpart of multiplying the observed ancilla signs along a branch.
Let $Y_\ell:=\Omega_\ell\otimes \rho$.
Then
\begin{equation}
\begin{split}
\Omega_{\ell+1}
&=
\mathcal{T}_+(\Omega_\ell)-\mathcal{T}_-(\Omega_\ell)\\
&=
\Tr_F\!\left[M_+Y_\ell M_+^\dagger - M_-Y_\ell M_-^\dagger\right].
\end{split}
\end{equation}
Using \eqref{eq:swap_relations_supp}, the two Kraus terms expand as
\begin{equation}
\begin{split}
4\bigl(M_+Y_\ell M_+^\dagger - M_-Y_\ell M_-^\dagger\bigr)
&=
(W_A+W_B)Y_\ell(W_A+W_B)\\
&\quad-(W_B-W_A)Y_\ell(W_B-W_A)\\
&=
2W_AY_\ell W_B + 2W_BY_\ell W_A.
\end{split}
\end{equation}
Hence, the parity-weighted storage recursion closes exactly as
\begin{equation}
\Omega_{\ell+1}
=
\frac12
\Tr_F\!\left[
W_A(\Omega_\ell\otimes \rho)W_B
+
W_B(\Omega_\ell\otimes \rho)W_A
\right].
\label{eq:omega_closed_supp}
\end{equation}

For every history $(x_1,\dots,x_\ell)\in\{\pm1\}^\ell$, define the unnormalized branch operator
\begin{equation}
\sigma_{x_1,\dots,x_\ell}
:=
\mathcal{T}_{x_\ell}\circ\cdots\circ\mathcal{T}_{x_1}(\rho),
\end{equation}
with $\sigma_{\emptyset}:=\rho$.
By construction,
\(
\Tr(\sigma_{x_1,\dots,x_\ell})=\Prob[x_1,\dots,x_\ell].
\)
These branch operators record the full measurement tree before the parity signs are summed.
We claim that
\begin{equation}
\Omega_\ell
=
\sum_{x_1,\dots,x_\ell\in\{\pm1\}}
\left(\prod_{r=1}^{\ell}x_r\right)\sigma_{x_1,\dots,x_\ell}.
\label{eq:omega_sigma_supp}
\end{equation}
The case $\ell=0$ is immediate.
Assume that \eqref{eq:omega_sigma_supp} holds at level $\ell$. This yields
\begin{equation}
\begin{split}
\Omega_{\ell+1}
&=
\sum_{x=\pm1}x\,\mathcal{T}_x(\Omega_\ell)\\
&=
\sum_{x=\pm1}x\,\mathcal{T}_x
\left[
\sum_{x_1,\dots,x_\ell}
\left(\prod_{r=1}^{\ell}x_r\right)\sigma_{x_1,\dots,x_\ell}
\right]\\
&=
\sum_{x_1,\dots,x_{\ell+1}}
\left(\prod_{r=1}^{\ell+1}x_r\right)
\sigma_{x_1,\dots,x_{\ell+1}},
\end{split}
\end{equation}
This proves \eqref{eq:omega_sigma_supp}.
Taking the trace yields
\begin{equation}
\begin{split}
\Tr(\Omega_\ell)
&=
\sum_{x_1,\dots,x_\ell}
\left(\prod_{r=1}^{\ell}x_r\right)
\Tr(\sigma_{x_1,\dots,x_\ell})\\
&=
\sum_{x_1,\dots,x_\ell}
\left(\prod_{r=1}^{\ell}x_r\right)\Prob[x_1,\dots,x_\ell]\\
&=
\E\!\left[\prod_{r=1}^{\ell}x_r\right].
\end{split}
\label{eq:trace_omega_supp}
\end{equation}

We now introduce the operator family generated directly by the PT permutation so that it can be compared with the circuit recursion.
Define the target operator family as
\begin{equation}
\Theta_\ell
:=
\Tr_{2,\dots,\ell+1}\!\left[
\SA{\ell+1}\rho^{\otimes(\ell+1)}(\SB{\ell+1})^{-1}
\right].
\label{eq:theta_def_supp}
\end{equation}
Set
\begin{equation}
R:=\rho^{T_B}.
\end{equation}
To compare the circuit recursion with the target observable, it is useful to rewrite $\Theta_\ell$ directly in terms of powers of $R$.
The base case is
\begin{equation}
\Theta_0=\rho,
\end{equation}
since $\SA{1}=(\SB{1})^{-1}=I$.
Direct expansion gives
\begin{equation}
\begin{split}
[\Theta_\ell]_{i\mu,j\nu}
=
\sum_{\substack{a_1,\dots,a_\ell\\ \alpha_1,\dots,\alpha_\ell}}
R_{i\nu,a_1\alpha_1}
R_{a_1\alpha_1,a_2\alpha_2}
\cdots
R_{a_\ell\alpha_\ell,j\mu}
=
[R^{\ell+1}]_{i\nu,j\mu}.
\end{split}
\label{eq:theta_R_supp}
\end{equation}
Hence
\begin{equation}
\Theta_\ell=(R^{\ell+1})^{T_B}.
\label{eq:theta_power_supp}
\end{equation}

The next step is to define
\begin{equation}
\mathcal{A}_\rho(X):=\Tr_F\!\left[W_A(X\otimes \rho)W_B\right],
\qquad
\mathcal{B}_\rho(X):=\Tr_F\!\left[W_B(X\otimes \rho)W_A\right].
\label{eq:AB_maps_supp}
\end{equation}
These are exactly the two cross terms appearing in the closed recursion for $\Omega_{\ell+1}$.
Their matrix elements are
\begin{equation}
[\mathcal{A}_\rho(X)]_{i\mu,j\nu}
=
\sum_{a,\alpha} X_{a\mu,j\alpha}\,\rho_{i\alpha,a\nu},
\label{eq:A_matrix_supp}
\end{equation}
and
\begin{equation}
[\mathcal{B}_\rho(X)]_{i\mu,j\nu}
=
\sum_{a,\alpha} X_{i\alpha,a\nu}\,\rho_{a\mu,j\alpha}.
\label{eq:B_matrix_supp}
\end{equation}
Substituting $X=\Theta_\ell=(R^{\ell+1})^{T_B}$ and $\rho=R^{T_B}$ into \eqref{eq:A_matrix_supp}, we obtain
\begin{equation}
\begin{split}
[\mathcal{A}_\rho(\Theta_\ell)]_{i\mu,j\nu}
&=
\sum_{a,\alpha}
[\Theta_\ell]_{a\mu,j\alpha}\,\rho_{i\alpha,a\nu}\\
&=
\sum_{a,\alpha}
[R^{\ell+1}]_{a\alpha,j\mu}\,R_{i\nu,a\alpha}\\
&=
[R^{\ell+2}]_{i\nu,j\mu}.
\end{split}
\label{eq:A_theta_supp}
\end{equation}
Likewise, substituting into \eqref{eq:B_matrix_supp}, we get
\begin{equation}
\begin{split}
[\mathcal{B}_\rho(\Theta_\ell)]_{i\mu,j\nu}
&=
\sum_{a,\alpha}
[\Theta_\ell]_{i\alpha,a\nu}\,\rho_{a\mu,j\alpha}\\
&=
\sum_{a,\alpha}
[R^{\ell+1}]_{i\nu,a\alpha}\,R_{a\alpha,j\mu}\\
&=
[R^{\ell+2}]_{i\nu,j\mu}.
\end{split}
\label{eq:B_theta_supp}
\end{equation}
Therefore both cross terms are exactly equal and both generate the next target operator:
\begin{equation}
\mathcal{A}_\rho(\Theta_\ell)
=
\mathcal{B}_\rho(\Theta_\ell)
=
(R^{\ell+2})^{T_B}
=
\Theta_{\ell+1}.
\label{eq:theta_closed_supp}
\end{equation}

At this point the two recursions can be compared directly.
Equations~\eqref{eq:omega_rec_supp}, \eqref{eq:omega_closed_supp}, and \eqref{eq:theta_closed_supp} show that $\Omega_\ell$ and $\Theta_\ell$ obey the same recursion with the same initial condition.
Hence
\begin{equation}
\Omega_\ell=\Theta_\ell,
\qquad\text{for all }\ell\ge 0.
\end{equation}
The last step is to take the trace, which turns the operator identity into the parity expectation appearing in the estimator.
Setting $\ell=j-1$ and using \eqref{eq:trace_omega_supp}, we obtain
\begin{equation}
\E[\widehat{v}_j]
=
\Tr(\Theta_{j-1})
\;=\;
\Tr\!\bigl[(\rho_{AB}^{T_B})^j\bigr]
\;=\;
\Tr\!\bigl[\ptperm{j}\rho^{\otimes j}\bigr].
\end{equation}
The last equality is exactly the PT-permutation identity from Appendix~\ref{app:pt_identity}.
\end{proof}

\section{Full Proof of Proposition~\ref{prop:universal}}
\label{app:universal}

\begin{proof}
Define $\mu(\rho):=\Tr((\rho^{T_B})^K)$.
We reduce estimation to binary hypothesis testing on a commuting separable family.
The proof has two steps: first produce a PT-moment gap of size at least $4\epsmom$, and then show that distinguishing the two resulting states requires $\Omega(K/\epsmom^2)$ copies.

Work in a $2\times 2$ product subspace with computational basis
\(
\{\ket0_A,\ket1_A\}\otimes\{\ket0_B,\ket1_B\}.
\)
For $p\in(0,1)$, we define
\begin{equation}
\rho_p := p\,\ket{00}\!\bra{00} + (1-p)\,\ket{01}\!\bra{01}.
\end{equation}
Because $\rho_p$ is diagonal in a product basis, partial transpose leaves it invariant:
\(
\rho_p^{T_B}=\rho_p.
\)
Therefore, we have
\begin{equation}
\mu(\rho_p)=\Tr(\rho_p^K)=p^K+(1-p)^K.
\end{equation}
Let $\mu(p):=p^K+(1-p)^K$.

Choose
\begin{equation}
p_0:=1-\frac{1}{2K},
\qquad
p_1:=p_0+\delta,
\end{equation}
with $\delta>0$ to be specified.
Since
\begin{equation}
\mu'(p)=K\bigl(p^{K-1}-(1-p)^{K-1}\bigr),
\end{equation}
there exists an absolute constant $a>0$ such that
\begin{equation}
\mu'(p)\ge aK
\qquad
\text{for all }
p\in\left[1-\frac{1}{2K},\,1-\frac{1}{4K}\right].
\end{equation}
We choose
\begin{equation}
\delta := \frac{4\epsmom}{aK},
\end{equation}
which lies in the admissible interval for $\epsmom$ sufficiently small.
This choice makes the induced PT-moment gap strictly larger than the target estimation scale.
By the mean-value theorem, we have
\begin{equation}
|\mu(\rho_{p_1})-\mu(\rho_{p_0})|
\ge 4\epsmom.
\end{equation}
Therefore, thresholding $\widehat p_K$ at the midpoint of the two target values yields a binary test: because the two target values are separated by at least $4\epsmom$, any estimator that is $\epsmom$-accurate on both states identifies the correct state, and the resulting testing error is at most $1/3$.
The remaining task is to lower-bound the copy complexity of that testing problem.

Let $\sigma_0:=\rho_{p_0}^{\otimes \Ntotal}$ and $\sigma_1:=\rho_{p_1}^{\otimes \Ntotal}$.
By the Holevo--Helstrom theorem~\cite{Helstrom1976,Watrous2018}, success probability at least $2/3$ implies
\begin{equation}
\frac12\|\sigma_0-\sigma_1\|_1\ge \frac13.
\label{eq:helstrom_universal_supp}
\end{equation}
Because the states commute, the trace distance coincides with the $\ell_1$ distance between the corresponding Bernoulli product distributions $P_0^{\otimes \Ntotal}$ and $P_1^{\otimes \Ntotal}$.
Pinsker's inequality in the $\ell_1$/KL form together with KL additivity yields
\begin{equation}
\|\sigma_0-\sigma_1\|_1^2
\le
2\Ntotal\,D_{\mathrm{KL}}(P_0\|P_1).
\end{equation}
Combining this with \eqref{eq:helstrom_universal_supp} gives
\begin{equation}
\Ntotal \ge \frac{2}{9}\cdot \frac{1}{D_{\mathrm{KL}}(P_0\|P_1)}.
\end{equation}
In other words, \eqref{eq:helstrom_universal_supp} gives the distinguishability level required for success probability at least $2/3$, while the KL/Pinsker bound upper-bounds the distinguishability attainable from $\Ntotal$ copies; therefore $\Ntotal$ must be large enough for that upper bound to reach at least $1/3$.
For Bernoulli laws with parameters $p_0$ and $p_1=p_0+\delta$, the standard local quadratic bound gives
\begin{equation}
D_{\mathrm{KL}}(P_0\|P_1)
\le
C\,\frac{\delta^2}{\min_{p\in[p_0,p_1]}p(1-p)}
\end{equation}
for an absolute constant $C>0$.
After decreasing the absolute constant $c$ in the proposition if necessary, $p_1\le 1-1/(4K)$, and hence $p(1-p)=\Theta(1/K)$ throughout $[p_0,p_1]$. Therefore
\begin{equation}
D_{\mathrm{KL}}(P_0\|P_1)
=
O(K\delta^2)
=
O\!\left(\frac{\epsmom^2}{K}\right).
\end{equation}
Substituting the chosen value of $\delta$ proves that
\begin{equation}
\Ntotal=\Omega\!\left(\frac{K}{\epsmom^2}\right).
\end{equation}
This converse is universal.
The hard family is commuting and separable, so the proof does not use NPT-specific structure.
\end{proof}

\section{Full Proof of Proposition~\ref{prop:ptspecific}}
\label{app:pt_specific}

\begin{proof}
For every $\theta$, $\rho_\theta$ is a pure state such that
\begin{equation}
\Tr(\rho_\theta^m)=1
\qquad\text{for all }m\ge 1.
\label{eq:ordinary_constant_supp}
\end{equation}
By contrast, the PT spectrum is
\begin{equation}
\operatorname{spec}(\rho_\theta^{T_B})
=
\left\{
\cos^2\theta,\,
\sin^2\theta,\,
\sin\theta\cos\theta,\,
-\sin\theta\cos\theta
\right\},
\label{eq:pt_spectrum_supp}
\end{equation}
which depends on $\theta$ and contains a negative eigenvalue for every $\theta>0$.
Hence every state in $I_K$ is NPT.
The proof now isolates a parameter window in which the PT moments change on the scale $\sqrt{K}$ while the underlying pure states remain hard to distinguish.

Define
\begin{equation}
f_K(\theta):=\Tr\!\bigl((\rho_\theta^{T_B})^K\bigr).
\end{equation}
Using \eqref{eq:pt_spectrum_supp}, we have
\begin{equation}
f_K(\theta)=
\begin{cases}
\cos^{2K}\theta+\sin^{2K}\theta, & K \text{ odd},\\[0.4em]
\cos^{2K}\theta+\sin^{2K}\theta+2(\sin\theta\cos\theta)^K, & K \text{ even}.
\end{cases}
\label{eq:fk_formula_supp}
\end{equation}
Now we write $f_K=q_K+r_K$, where
\begin{equation}
q_K(\theta):=\cos^{2K}\theta+\sin^{2K}\theta
\end{equation}
and
\begin{equation}
r_K(\theta):=
\begin{cases}
0, & K \text{ odd},\\
2(\sin\theta\cos\theta)^K, & K \text{ even}.
\end{cases}
\end{equation}

We first show that the derivative of $f_K$ is uniformly of order $\sqrt{K}$ on $I_K$.
Since
\begin{equation}
q_K'(\theta)=2K\sin\theta\cos\theta\left(\sin^{2K-2}\theta-\cos^{2K-2}\theta\right),
\end{equation}
we have
\begin{equation}
|q_K'(\theta)|
\ge
2K\sin\theta\cos\theta\left(\cos^{2K-2}\theta-\sin^{2K-2}\theta\right).
\label{eq:qprime_lower_supp}
\end{equation}
For every $\theta\in I_K$ and $K\ge 3$, we have
\begin{equation}
\sin\theta \ge \frac{1}{6\sqrt{K}},
\qquad
\cos\theta \ge 1-\frac{\theta^2}{2}\ge \frac{31}{32},
\label{eq:sin_cos_bounds_supp}
\end{equation}
and also
\begin{equation}
\cos^{2K-2}\theta
=(\cos^2\theta)^{K-1}
\ge
(1-\theta^2)^{K-1}
\ge
1-(K-1)\theta^2
\ge
\frac{15}{16},
\label{eq:cospow_bound_supp}
\end{equation}
while
\begin{equation}
\sin^{2K-2}\theta
\le
\left(\frac{1}{4\sqrt{K}}\right)^{2K-2}
\le
\frac{1}{48}.
\label{eq:sinpow_bound_supp}
\end{equation}
Substituting \eqref{eq:sin_cos_bounds_supp}--\eqref{eq:sinpow_bound_supp} into \eqref{eq:qprime_lower_supp} gives
\begin{equation}
|q_K'(\theta)|
\ge
\frac{341}{1152}\sqrt{K}
>
\frac{1}{4}\sqrt{K}.
\label{eq:qprime_final_supp}
\end{equation}

If $K$ is even, then it holds that
\begin{equation}
r_K'(\theta)
=
2K(\sin\theta\cos\theta)^{K-1}(\cos^2\theta-\sin^2\theta),
\end{equation}
so that
\begin{equation}
|r_K'(\theta)|
\le
2K\theta^{K-1}
\le
2K\left(\frac{1}{4\sqrt{K}}\right)^{K-1}
\le
\frac{1}{12}\sqrt{K}.
\label{eq:rprime_bound_supp}
\end{equation}
When $K$ is odd, this becomes $r_K'\equiv 0$.
Combining \eqref{eq:qprime_final_supp} and \eqref{eq:rprime_bound_supp}, we conclude that for all $K\ge 3$ and all $\theta\in I_K$,
\begin{equation}
|f_K'(\theta)|\ge \frac{1}{8}\sqrt{K}.
\label{eq:derivative_lb_supp}
\end{equation}
Moreover, $q_K'(\theta)<0$ throughout $I_K$ because $\sin\theta<\cos\theta$ there.
For even $K$, the correction $r_K'(\theta)$ has the opposite sign and is strictly smaller in magnitude by \eqref{eq:rprime_bound_supp}.
Hence $f_K'$ does not change sign on $I_K$.
The explicit derivative inequality quoted in \eqref{eq:derivative_main} follows with the universal choice $c_*=1/8$.

Now, we choose
\begin{equation}
\theta_0:=\frac{1}{5\sqrt{K}},
\qquad
\theta_1:=\theta_0+\Delta,
\qquad
\Delta:=\frac{32\epsmom}{\sqrt{K}}.
\end{equation}
For $\epsmom$ smaller than an absolute constant, for example $\epsmom\le 1/640$, $\theta_1$ remains inside $I_K$.
By the mean-value theorem and \eqref{eq:derivative_lb_supp}, this yields
\begin{equation}
|f_K(\theta_1)-f_K(\theta_0)|
\ge
\frac{1}{8}\sqrt{K}\,\Delta
=
4\epsmom.
\label{eq:moment_gap_supp}
\end{equation}
Therefore, thresholding $\widehat p_K$ at the midpoint of the two target values yields a strict binary discrimination reduction: because the PT-moment values differ by at least $4\epsmom$, any estimator that is $\epsmom$-accurate on both states identifies the correct hypothesis, so the resulting success probability is at least $2/3$.
The last part of the proof converts this PT-moment gap into a lower bound on the number of copies needed for discrimination.
Because the states are pure, we have
\begin{equation}
\left|\braket{\psi_{\theta_0}}{\psi_{\theta_1}}\right|
=
\cos(\theta_1-\theta_0)
=
\cos\Delta.
\end{equation}
Therefore
\begin{equation}
\frac12\left\|
\rho_{\theta_0}^{\otimes \Ntotal}
-
\rho_{\theta_1}^{\otimes \Ntotal}
\right\|_1
=
\sqrt{1-\cos^{2\Ntotal}\Delta},
\label{eq:pure_trace_distance_supp}
\end{equation}
see, e.g.,~\cite{Watrous2018}.
Now, use $1-a^N\le N(1-a)$ for $a\in[0,1]$ and $1-\cos^2\Delta=\sin^2\Delta\le \Delta^2$,
\begin{equation}
\frac12\left\|
\rho_{\theta_0}^{\otimes \Ntotal}
-
\rho_{\theta_1}^{\otimes \Ntotal}
\right\|_1
\le
\sqrt{\Ntotal}\,|\Delta|.
\label{eq:distance_bound_supp}
\end{equation}
By the Helstrom bound~\cite{Helstrom1976}, achieving success probability at least $2/3$ requires
\begin{equation}
\frac12\left\|
\rho_{\theta_0}^{\otimes \Ntotal}
-
\rho_{\theta_1}^{\otimes \Ntotal}
\right\|_1
\ge \frac13.
\end{equation}
Combining this with \eqref{eq:distance_bound_supp} yields
\begin{equation}
\sqrt{\Ntotal}\,|\Delta| \ge \frac13,
\qquad\text{hence}\qquad
\Ntotal\ge \frac{1}{9\Delta^2}.
\end{equation}
Since $\Delta=32\epsmom/\sqrt{K}$, we obtain
\begin{equation}
\Ntotal
\ge
\frac{K}{9216\,\epsmom^2}
=
\Omega\!\left(\frac{K}{\epsmom^2}\right).
\end{equation}
This proves Proposition~\ref{prop:ptspecific}.
\end{proof}

\section{Additional Downstream PT-Functional Details}
\label{app:downstream}

This appendix records downstream PT-functional material supporting the separation between PT-moment acquisition and PT-functional reconstruction.

\subsection{Direct monomial \texorpdfstring{$\erfop$}{erf}--Taylor route}

Let
\begin{equation}
M_0:=\rho^{T_B},
\qquad
\Lambda\ge \|M_0\|_\infty,
\qquad
M:=M_0/\Lambda.
\end{equation}
Then $M$ is Hermitian, $\|M\|_\infty\le 1$, and
\begin{equation}
\mathcal{N}(\rho)=\frac{\|M_0\|_1-1}{2}
=
\frac{\Lambda\Tr|M|-1}{2}.
\end{equation}
The raw PT moments are
\begin{equation}
p_j:=\Tr(M_0^j),
\end{equation}
while the rescaled moments are
\begin{equation}
q_j:=\Tr(M^j)=\Lambda^{-j}p_j.
\end{equation}
We approximate $|x|$ by the smooth function
\begin{equation}
f_\alpha(x):=x\,\erfop(\alpha x),
\end{equation}
and then approximate $f_\alpha$ by a truncated Taylor polynomial in the monomial basis.

For an integer $m\ge 0$, set $D:=2m+2$ and define the degree-$D$ polynomial
\begin{equation}
p_{m,\alpha}(x)
:=
\frac{2}{\sqrt{\pi}}
\sum_{n=0}^{m}
\frac{(-1)^n\alpha^{2n+1}}{n!(2n+1)}\,x^{2n+2}
=
\sum_{j=2}^{D} c_j x^j.
\end{equation}

\begin{lemma}[$\erfop$ smoothing]
\label{lem:erf_supp}
Let $M$ have rank $r$ and spectrum inside $[-1,1]$.
For every $\delta\in(0,1]$ and every $\alpha>0$, it holds
\begin{equation}
\bigl|\Tr|M|-\Tr\!\bigl(M\,\erfop(\alpha M)\bigr)\bigr|
\le
r\delta + r\,\erfc(\alpha\delta).
\label{eq:erf_bound_supp}
\end{equation}
In particular, if $\delta=\varepsilon/(4r)$ and
\begin{equation}
\alpha
:=
\frac{4r}{\varepsilon}
\sqrt{\log\!\left(\frac{8r}{\sqrt{\pi}\varepsilon}\right)},
\end{equation}
then the right-hand side of \eqref{eq:erf_bound_supp} is at most $\varepsilon/2$.
\end{lemma}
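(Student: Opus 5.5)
We diagonalize $M$ and reduce the statement to a scalar inequality applied eigenvalue by eigenvalue. Write $M=\sum_{k=1}^{r}\lambda_k\ket{e_k}\!\bra{e_k}$ with nonzero eigenvalues $\lambda_k\in[-1,1]$. The function $x\mapsto x\,\erfop(\alpha x)$ vanishes at $x=0$, so only these $r$ eigenvalues contribute. Since $\erfop$ is odd, $x\,\erfop(\alpha x)=|x|\,\erfop(\alpha|x|)$. This gives
\begin{equation*}
\Tr|M|-\Tr\!\bigl(M\,\erfop(\alpha M)\bigr)=\sum_{k=1}^{r}|\lambda_k|\bigl(1-\erfop(\alpha|\lambda_k|)\bigr)=\sum_{k=1}^{r}|\lambda_k|\,\erfc(\alpha|\lambda_k|).
\end{equation*}
Every term is nonnegative, so the absolute value is harmless. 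It therefore suffices to prove the scalar bound $t\,\erfc(\alpha t)\le \delta+\erfc(\alpha\delta)$ for every $t\in[0,1]$, and then sum over the $r$ eigenvalues.

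For the scalar bound, split on the size of $t$. If $t\le\delta$, then $\erfc\le 1$ gives $t\,\erfc(\alpha t)\le t\le\delta$. If $t>\delta$, then $\erfc$ is decreasing, so $\erfc(\alpha t)\le\erfc(\alpha\delta)$, and $t\le 1$ gives $t\,\erfc(\alpha t)\le\erfc(\alpha\delta)$. In both cases the term is at most $\delta+\erfc(\alpha\delta)$, and summing over the $r$ eigenvalues proves \eqref{eq:erf_bound_supp}.

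For the ``in particular'' clause, take $\delta=\varepsilon/(4r)$, so that $r\delta=\varepsilon/4$. It remains to show $r\,\erfc(\alpha\delta)\le\varepsilon/4$. We use the standard Gaussian tail bound $\erfc(z)\le e^{-z^2}/(z\sqrt{\pi})$ for $z>0$, together with the simpler $\erfc(z)\le e^{-z^2}$ as a fallback. With the stated $\alpha$ we have $z:=\alpha\delta=\sqrt{L}$, where $L:=\log\!\bigl(8r/(\sqrt{\pi}\varepsilon)\bigr)$. Then $e^{-z^2}=\sqrt{\pi}\varepsilon/(8r)$, and hence
\begin{equation*}
r\,\erfc(z)\le \frac{r}{\sqrt{\pi}\,z}\cdot\frac{\sqrt{\pi}\,\varepsilon}{8r}=\frac{\varepsilon}{8\sqrt{L}}.
\end{equation*}
This is at most $\varepsilon/4$ exactly when $L\ge 1/4$. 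Note that $\alpha\delta=\sqrt{L}$ is real only when $8r/(\sqrt{\pi}\varepsilon)\ge 1$; this holds in the intended regime $\varepsilon\le 1$, $r\ge1$, where in fact $L\ge\log(8/\sqrt{\pi})\approx 1.5$. Combining the two pieces gives a total of at most $\varepsilon/2$.

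The only delicate step is this final one, where the chosen constants must match the tail bound, since the lemma quantifies $\varepsilon$ implicitly. If $\varepsilon$ were allowed to be large enough that $L<1/4$, the $1/z$ tail bound would be too weak. In that case we would first check whether the fallback bound $\erfc(z)\le e^{-z^2}$ gives $r\,\erfc(z)\le\sqrt{\pi}\varepsilon/8<\varepsilon/4$. It does, since $\sqrt{\pi}/8<1/4$. This second route closes the argument for every $\varepsilon$ with $L\ge 0$, without needing $L\ge 1/4$.
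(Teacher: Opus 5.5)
Your proof is correct and follows the paper's argument: you diagonalize, reduce to $\sum_k|\lambda_k|\,\erfc(\alpha|\lambda_k|)$, split the eigenvalues at $\delta$, and then apply an $\erfc$ tail bound. The only difference is that tail bound. The paper uses the looser $\erfc(z)\le\frac{2}{\sqrt{\pi}}e^{-z^2}$, which gives exactly $r\,\erfc(\alpha\delta)\le\varepsilon/4$, whereas your sharper $\erfc(z)\le e^{-z^2}$ suffices on its own, so the Mills-ratio detour and the $L\ge 1/4$ case split are unnecessary. You also usefully flag the condition $8r/(\sqrt{\pi}\varepsilon)\ge 1$, needed for $\alpha$ to be real, which the paper leaves implicit.
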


\begin{proof}
Diagonalize $M=\sum_{i=1}^r\lambda_i\ket{\psi_i}\!\bra{\psi_i}$.
Then, it reads
\begin{equation}
\Tr|M|-\Tr\!\bigl(M\,\erfop(\alpha M)\bigr)
=
\sum_{i=1}^r |\lambda_i|\,\erfc(\alpha|\lambda_i|).
\end{equation}
Split the sum into $|\lambda_i|<\delta$ and $|\lambda_i|\ge \delta$.
The first part is at most $r\delta$.
For the second, use $|\lambda_i|\le 1$ and monotonicity of $\erfc$ to obtain at most $r\,\erfc(\alpha\delta)$.
The explicit choice of $\alpha$ follows from the standard bound
\(
\erfc(z)\le \frac{2}{\sqrt{\pi}}e^{-z^2}.
\)
\end{proof}

\begin{lemma}[Truncated Taylor remainder]
\label{lem:taylor_supp}
For every $|x|\le 1$, it holds that
\begin{equation}
\bigl|f_\alpha(x)-p_{m,\alpha}(x)\bigr|
\le
\frac{2}{\sqrt{\pi}}\alpha e^{\alpha^2}\frac{(\alpha^2)^{m+1}}{(m+1)!}.
\end{equation}
Consequently, we have
\begin{equation}
\bigl|\Tr(f_\alpha(M))-\Tr(p_{m,\alpha}(M))\bigr|
\le
\frac{2r}{\sqrt{\pi}}\alpha e^{\alpha^2}\frac{(\alpha^2)^{m+1}}{(m+1)!}.
\end{equation}
\end{lemma}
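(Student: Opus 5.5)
The plan is to write $f_\alpha$ as an explicit power series, identify $p_{m,\alpha}$ as its truncation, and bound the tail termwise. Starting from the Maclaurin series
\[
\erfop(z)=\frac{2}{\sqrt{\pi}}\sum_{n=0}^{\infty}\frac{(-1)^n z^{2n+1}}{n!(2n+1)},
\]
which converges absolutely for all real $z$, we get
\[
f_\alpha(x)=x\,\erfop(\alpha x)=\frac{2}{\sqrt{\pi}}\sum_{n=0}^{\infty}\frac{(-1)^n\alpha^{2n+1}}{n!(2n+1)}x^{2n+2}.
\]
Its partial sum up to $n=m$ is exactly $p_{m,\alpha}$. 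Hence
\[
f_\alpha(x)-p_{m,\alpha}(x)=\frac{2}{\sqrt{\pi}}\sum_{n\ge m+1}\frac{(-1)^n\alpha^{2n+1}}{n!(2n+1)}x^{2n+2}.
\]

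For $|x|\le 1$, the triangle inequality together with $|x|^{2n+2}\le 1$ and $1/(2n+1)\le 1$ bounds the remainder by
\[
\frac{2}{\sqrt{\pi}}\,\alpha\sum_{n\ge m+1}\frac{(\alpha^2)^n}{n!}.
\]
This is the tail of the exponential series $e^{\alpha^2}$. The standard Lagrange remainder bound for the exponential, or equivalently the factorization $\sum_{n\ge m+1} y^n/n!\le \frac{y^{m+1}}{(m+1)!}\sum_{k\ge0} y^k/k!$ obtained from $(m+1+k)!\ge (m+1)!\,k!$, gives
\[
\sum_{n\ge m+1}\frac{y^n}{n!}\le \frac{y^{m+1}}{(m+1)!}\,e^{y}
\]
for $y=\alpha^2\ge0$. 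This yields the pointwise bound as stated.

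For the trace statement, diagonalize $M=\sum_{i=1}^r\lambda_i\ket{\psi_i}\!\bra{\psi_i}$ with $\lambda_i\in[-1,1]$. Both $f_\alpha$ and $p_{m,\alpha}$ vanish at $0$: $f_\alpha(0)=0$, and $p_{m,\alpha}$ has no constant term. So only the $r$ nonzero eigenvalues contribute, and
\[
\Tr f_\alpha(M)-\Tr p_{m,\alpha}(M)=\sum_{i=1}^r\bigl(f_\alpha(\lambda_i)-p_{m,\alpha}(\lambda_i)\bigr).
\]
Applying the pointwise bound to each term and summing gives the factor $r$.

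There is no real obstacle here. The one point to watch is that the rank factor relies on $f_\alpha(0)=p_{m,\alpha}(0)=0$; otherwise the bound would pick up the full dimension instead of $r$.
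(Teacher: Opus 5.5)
Your proof is correct and follows the paper's argument essentially step for step. Both use the Maclaurin tail of $x\,\erfop(\alpha x)$, bound it termwise by $\frac{2}{\sqrt{\pi}}\alpha\sum_{n\ge m+1}(\alpha^2)^n/n!$, apply the exponential-tail estimate, and then sum the pointwise bound over eigenvalues. Your remark that $f_\alpha(0)=p_{m,\alpha}(0)=0$ is what justifies the factor $r$ rather than the full dimension; the paper leaves this implicit in ``at most $r$ eigenvalues.''
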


\begin{proof}
Using the Maclaurin series of $\erfop$~\cite[Ch.~7]{olver2010nist},
\begin{equation}
f_\alpha(x)-p_{m,\alpha}(x)
=
\frac{2}{\sqrt{\pi}}
\sum_{n=m+1}^{\infty}
\frac{(-1)^n\alpha^{2n+1}}{n!(2n+1)}x^{2n+2}.
\end{equation}
For $|x|\le 1$,
\begin{equation}
\bigl|f_\alpha(x)-p_{m,\alpha}(x)\bigr|
\le
\frac{2}{\sqrt{\pi}}
\alpha
\sum_{n=m+1}^{\infty}\frac{(\alpha^2)^n}{n!}
\le
\frac{2}{\sqrt{\pi}}\alpha e^{\alpha^2}\frac{(\alpha^2)^{m+1}}{(m+1)!}.
\end{equation}
Summing the scalar bound over at most $r$ eigenvalues proves the operator statement.
\end{proof}

For the monomial post-processing route, we write
\begin{equation}
\widehat{p}_j:=\frac{1}{\Mshot}\sum_{s=1}^{\Mshot}Y_j^{(s)},
\qquad
\widehat{q}_j:=\Lambda^{-j}\widehat{p}_j,
\qquad
\widehat{S}:=\sum_{j=2}^{D}c_j\widehat{q}_j,
\end{equation}
and define the rescaled coefficient weight
\begin{equation}
\lambda_\Lambda:=\sum_{j=2}^{D}|c_j|\Lambda^{-j}.
\end{equation}

\begin{proposition}[Coefficient-weight overhead for monomial \texorpdfstring{$\erfop$}{erf}--Taylor post-processing]
\label{thm:monomial_supp}
Suppose that $Y_j^{(s)}$ are simultaneous unbiased estimators of the raw PT moments $p_j=\Tr(M_0^j)$, with $|Y_j^{(s)}|\le 1$ almost surely.
Then for every $\eta>0$,
\begin{equation}
\Prob\!\left[\left|\widehat{S}-\Tr(p_{m,\alpha}(M))\right|\ge \eta\right]
\le
2\exp\!\left(-\frac{\Mshot \eta^2}{2\lambda_\Lambda^2}\right).
\label{eq:monomial_hoeffding_supp}
\end{equation}
In particular, it suffices to take
\begin{equation}
\Mshot
\ge
\frac{2\lambda_\Lambda^2}{\eta^2}
\log\frac{2}{\beta},
\label{eq:monomial_shots_supp}
\end{equation}
to guarantee probability at least $1-\beta$.
\end{proposition}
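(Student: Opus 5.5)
The plan is to prove Proposition~\ref{thm:monomial_supp} as a direct Hoeffding argument applied to one scalar random variable per shot, the linear combination of the per-shot moment estimators. For each execution $s$ I define
\begin{equation}
Z^{(s)}:=\sum_{j=2}^{D}c_j\Lambda^{-j}Y_j^{(s)},
\end{equation}
so that $\widehat{S}=\frac{1}{\Mshot}\sum_{s=1}^{\Mshot}Z^{(s)}$. The shots are independent executions, so the $Z^{(s)}$ are i.i.d.\ random variables.

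The first step is the mean. By linearity of expectation and unbiasedness, $\E[Y_j^{(s)}]=p_j=\Tr(M_0^j)$. Hence
\begin{equation}
\E[Z^{(s)}]=\sum_{j=2}^{D}c_j\Lambda^{-j}\Tr(M_0^j)=\sum_{j=2}^{D}c_j\Tr(M^j)=\Tr\bigl(p_{m,\alpha}(M)\bigr),
\end{equation}
using $M=M_0/\Lambda$ and the monomial expansion $p_{m,\alpha}(x)=\sum_{j=2}^D c_jx^j$.

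The second step is the range. The triangle inequality and $|Y_j^{(s)}|\le1$ give $|Z^{(s)}|\le\sum_j|c_j|\Lambda^{-j}=\lambda_\Lambda$, so each $Z^{(s)}$ lies in an interval of length $2\lambda_\Lambda$. The per-$j$ estimators may be strongly correlated within a shot, as they are for the cumulative parities of Lemma~\ref{lem:seq}. This does not matter, because only the deterministic bound on $Z^{(s)}$ is used and not any independence across $j$.

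The third step applies Hoeffding's inequality~\cite{hoeffding1963probability} to i.i.d.\ variables with range length $2\lambda_\Lambda$:
\begin{equation}
\Prob\bigl[|\widehat S-\E\widehat S|\ge\eta\bigr]\le 2\exp\!\left(-\frac{2\Mshot^2\eta^2}{\Mshot(2\lambda_\Lambda)^2}\right)=2\exp\!\left(-\frac{\Mshot\eta^2}{2\lambda_\Lambda^2}\right),
\end{equation}
which is \eqref{eq:monomial_hoeffding_supp}. Setting the right-hand side to be at most $\beta$ and solving for $\Mshot$ gives \eqref{eq:monomial_shots_supp}.

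There is no real obstacle here. The only point needing care is to collapse the estimators into the single bounded scalar $Z^{(s)}$ before concentrating, rather than taking a union bound over $j$, which would introduce an unwanted $\log D$ factor. One should also check that the rescaling by $\Lambda^{-j}$ is absorbed correctly into both the mean identity and the range bound $\lambda_\Lambda$.
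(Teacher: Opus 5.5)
Your proposal is correct and matches the paper's proof essentially step for step. The paper also collapses each shot into the bounded scalar $Z^{(s)}=\sum_{j=2}^{D}c_j\Lambda^{-j}Y_j^{(s)}$, checks $\E[Z^{(s)}]=\Tr(p_{m,\alpha}(M))$ and $|Z^{(s)}|\le\lambda_\Lambda$, and applies Hoeffding's inequality to the empirical mean; your explicit mention of independence across shots states an assumption the paper leaves implicit.
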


For negativity reconstruction, the relevant error split is
\begin{equation}
\bigl|\widehat{\mathcal N}-\mathcal N(\rho)\bigr|
\le
\frac{\Lambda}{2}
\left(
\bigl|\Tr|M|-\Tr(p_{m,\alpha}(M))\bigr|
+
\bigl|\Tr(p_{m,\alpha}(M))-\widehat{S}\bigr|
\right),
\label{eq:negativity_split_supp}
\end{equation}
where
\begin{equation}
\widehat{\mathcal N}:=\frac{\Lambda\widehat{S}-1}{2}.
\end{equation}
Thus negativity error $\varepsilon_N$ is ensured whenever the trace-level error for the rescaled operator is controlled to $\eta=2\varepsilon_N/\Lambda$.
The coefficient weight appearing in Proposition~\ref{thm:monomial_supp} is
\begin{equation}
\lambda_\Lambda
=
\frac{2}{\sqrt{\pi}}
\sum_{n=0}^{m}
\frac{\alpha^{2n+1}\Lambda^{-(2n+2)}}{n!(2n+1)}.
\label{eq:coeff_lambda_supp}
\end{equation}
For $\Lambda=1$, this becomes
\begin{equation}
\lambda_1
\le
\frac{2}{\sqrt{\pi}}\alpha e^{\alpha^2},
\label{eq:coeff_l1_supp}
\end{equation}
which makes the coefficient-growth source of the Hoeffding overhead explicit.
Downstream negativity complexity therefore depends on PT-moment acquisition and on the coefficient weight of the chosen polynomial representation.
This is a coefficient-stability statement for the monomial post-processing route, not an information-theoretic lower bound on all negativity estimators or polynomial representations.

\begin{proof}
The proof isolates the single-shot random variable whose average equals the polynomial functional and then applies Hoeffding's inequality.
Define
\begin{equation}
Z^{(s)}:=\sum_{j=2}^{D}c_j\Lambda^{-j}Y_j^{(s)}.
\end{equation}
Its expectation is exactly the target trace:
\begin{equation}
\E[Z^{(s)}]
=
\sum_{j=2}^{D}c_j\Lambda^{-j}p_j
=
\sum_{j=2}^{D}c_j q_j
=
\Tr(p_{m,\alpha}(M)),
\end{equation}
while the absolute value is controlled by the coefficient weight,
\begin{equation}
|Z^{(s)}|
\le
\sum_{j=2}^{D}|c_j|\Lambda^{-j}
=
\lambda_\Lambda.
\end{equation}
Applying Hoeffding's inequality to the empirical average of the $Z^{(s)}$ gives
\begin{equation}
\Prob\!\left[\left|\frac{1}{\Mshot}\sum_{s=1}^{\Mshot}Z^{(s)}-\Tr(p_{m,\alpha}(M))\right|>\eta\right]
\le
2\exp\!\left(-\frac{\Mshot\eta^2}{2\lambda_\Lambda^2}\right),
\end{equation}
which proves \eqref{eq:monomial_hoeffding_supp} and \eqref{eq:monomial_shots_supp}.
The negativity split in \eqref{eq:negativity_split_supp} follows from the definitions of the exact and estimated functionals:
\begin{equation}
\mathcal N(\rho)=\frac{\Lambda\Tr|M|-1}{2},
\qquad
\widehat{\mathcal N}=\frac{\Lambda\widehat{S}-1}{2}.
\end{equation}
Finally, the coefficient formula is obtained by inserting the monomial coefficients of $p_{m,\alpha}$:
\begin{equation}
\lambda_\Lambda
=
\frac{2}{\sqrt{\pi}}
\sum_{n=0}^{m}
\frac{\alpha^{2n+1}\Lambda^{-(2n+2)}}{n!(2n+1)}.
\end{equation}
For $\Lambda=1$, this becomes
\begin{equation}
\lambda_1
=
\frac{2}{\sqrt{\pi}}
\sum_{n=0}^{m}
\frac{\alpha^{2n+1}}{n!(2n+1)}
\le
\frac{2}{\sqrt{\pi}}\alpha e^{\alpha^2},
\end{equation}
which proves \eqref{eq:coeff_lambda_supp} and \eqref{eq:coeff_l1_supp}.
\end{proof}

\subsection{Coherent PT-LCSP route}

\begin{figure}[ht]
\centering
\includegraphics[width=0.7\linewidth]{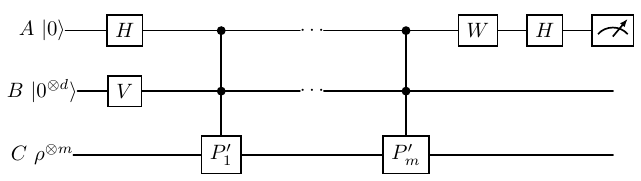}
\caption{LCSP-style coherent post-processing circuit.
It illustrates the PT-LCSP construction used in Appendix~\ref{app:downstream}.}
\label{fig:lcsp_supp}
\end{figure}

This coherent post-processing route follows the same broad linear-combination viewpoint used in recent quantum-state-function methods for estimating nonlinear state functionals~\cite{Yao2024qsf}. 
In the present setting, the ingredients are the PT-permutation moment primitives, and the downstream circuit can be implemented as an LCU/Hadamard-test selector over these PT-permutation measurements~\cite{childs2012hamiltonian}. 
A schematic selector/Hadamard-test circuit for this coherent PT-LCSP route is shown in Fig.~\ref{fig:lcsp_supp}.
For the coherent route, fix a polynomial trace functional
\begin{equation}
F_D(M)=\sum_{j=2}^{D}\alpha_j \Tr\!\bigl(M^j\bigr)
=\sum_{j=2}^{D}\alpha_j\Lambda^{-j}\Tr\!\bigl(M_0^j\bigr),
\end{equation}
and define the corresponding rescaled coefficient weight
\begin{equation}
\lambda_\Lambda:=\sum_{j=2}^{D}|\alpha_j|\Lambda^{-j}.
\end{equation}
The same coherent construction can be extended to observables inserted into the trace, but we keep the statement here at the level needed for PT-moment-based polynomial post-processing.

\begin{lemma}[Permutation-trace identity]
\label{lem:permtrace_supp}
For a permutation $\pi\in S_N$ acting on $N$ copies of a state $\rho$,
\begin{equation}
\Tr(P_\pi \rho^{\otimes N})
=
\prod_{\bm c\in \mathrm{cycles}(\pi)} \Tr(\rho^{|\bm c|}),
\end{equation}
where the product is over the cycle decomposition of $\pi$.
\end{lemma}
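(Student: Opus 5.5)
The plan is to prove Lemma~\ref{lem:permtrace_supp} by a direct index computation in a fixed orthonormal basis $\{\ket{a}\}$ of the single-copy space $\mathcal{H}$ of dimension $d$. The argument mirrors the one used in Appendix~\ref{app:pt_identity}, but now with a general permutation and no partial transpose. First I fix the convention for $P_\pi$:
\begin{equation*}
P_\pi\ket{a_1}\otimes\cdots\otimes\ket{a_N}=\ket{a_{\pi^{-1}(1)}}\otimes\cdots\otimes\ket{a_{\pi^{-1}(N)}}.
\end{equation*}
Hence the matrix elements are $\bra{b_1\cdots b_N}P_\pi\ket{a_1\cdots a_N}=\prod_{k}\delta_{b_k,a_{\pi^{-1}(k)}}$. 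I will note that the final identity is insensitive to this choice: $\pi$ and $\pi^{-1}$ have the same cycle type, so the opposite convention gives the same right-hand side.

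Next I expand the trace:
\begin{equation*}
\Tr(P_\pi\rho^{\otimes N})=\sum_{a,b}\prod_k\delta_{b_k,a_{\pi^{-1}(k)}}\prod_k\rho_{a_k b_k}=\sum_{a_1,\dots,a_N}\prod_{k=1}^N\rho_{a_k,a_{\pi^{-1}(k)}}.
\end{equation*}
Reindexing $k\mapsto\pi(k)$ rewrites each factor as $\rho_{a_{\pi(k)},a_k}$. So every index $a_k$ is contracted with its image under $\pi$, and the sum is a product of matrix entries chained along the orbits of $\pi$.

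I then decompose $\pi$ into disjoint cycles $\bm c=(k_1\,k_2\,\cdots\,k_L)$. The factors and summation indices belonging to different cycles are disjoint, so the sum factorizes over cycles. Within one cycle the contribution is
\begin{equation*}
\sum_{a_{k_1},\dots,a_{k_L}}\rho_{a_{k_2}a_{k_1}}\rho_{a_{k_3}a_{k_2}}\cdots\rho_{a_{k_1}a_{k_L}}=\Tr(\rho^{L}),
\end{equation*}
which is a closed matrix-product chain (read in reverse order, which does not affect the trace). Fixed points are the case $L=1$ and give $\Tr\rho$, consistent with the formula.

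The only step requiring care is the bookkeeping of the convention and the reindexing, making sure each cycle closes into a single trace rather than, for example, a transpose. The trace of a product chain read backwards is $\Tr((\rho^{T})^L)=\Tr(\rho^L)$, so orientation does not matter, and no real obstacle remains. I also expect to remark that for $\pi$ the standard $j$-cycle this recovers $\Tr(\rho^j)$. That special case is what the PT-LCSP construction uses, with $\rho$ replaced by the factorized $A$/$B$ contraction from Appendix~\ref{app:pt_identity}.
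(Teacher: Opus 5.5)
Your proof is correct and takes the same route as the paper: expand $\Tr(P_\pi\rho^{\otimes N})$ in a product basis, observe that each cycle $\bm c$ of $\pi$ closes a chain of matrix elements into $\Tr(\rho^{|\bm c|})$, and factorize over disjoint cycles. Your version adds what the paper's one-line proof leaves implicit, namely the explicit convention for $P_\pi$, the reindexing $k\mapsto\pi(k)$, and the check that cycle orientation (or the choice of $\pi$ versus $\pi^{-1}$) does not affect the result.
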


\begin{proof}
Expand the trace in a product basis.
Each cycle of $\pi$ contracts a chain of matrix elements into one ordinary moment, and disjoint cycles factorize.
\end{proof}

\begin{lemma}[PT-LCSP implementation]
\label{lem:ptlcsp_supp}
For the functional $F_D(M)$ defined above, there exists a coherent circuit that uses $D$ copies per shot and outputs an unbiased estimator of $F_D(M)/\lambda_\Lambda$.
Estimating $F_D(M)$ to additive accuracy $\varepsilon$ requires
\begin{equation}
\Mshot = O\!\left(\frac{\lambda_\Lambda^2}{\varepsilon^2}\right)
\qquad\text{and hence}\qquad
\Ntotal = O\!\left(\frac{\lambda_\Lambda^2 D}{\varepsilon^2}\right)
\end{equation}
copies.
\end{lemma}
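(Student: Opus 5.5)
The plan is to build an LCU-style Hadamard-test selector over the PT-permutation primitives. The key design choice is that each shot outputs a bounded $\pm\lambda_\Lambda$-scaled sign variable whose mean is $F_D(M)/\lambda_\Lambda$; Hoeffding's inequality then gives the shot count.

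\emph{Selector state.} Prepare a selector register in
\[
\sum_{j=2}^{D}\sqrt{|\alpha_j|\Lambda^{-j}/\lambda_\Lambda}\,\ket{j}.
\]
Absorb the sign of $\alpha_j$ into a classical phase bit $s_j=\operatorname{sgn}(\alpha_j)$ recorded alongside $j$. Equivalently, and more simply, sample the index $j$ classically with probability $\pi_j:=|\alpha_j|\Lambda^{-j}/\lambda_\Lambda$. This classical-sampling variant already suffices for unbiasedness; the coherent selector of Fig.~\ref{fig:lcsp_supp} realizes the same distribution through measuring the selector in its computational basis.

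\emph{Controlled step.} Conditioned on $j$, run a Hadamard test of the unitary $\ptperm{j}$ on $\rho^{\otimes j}$, padding with $D-j$ unused copies so that every shot consumes exactly $D$ copies. Alternatively, run the sequential circuit of Lemma~\ref{lem:seq} to depth $j$. Either way, one obtains a $\{\pm1\}$ variable $y$ with $\E[y\mid j]=\Re\Tr[\ptperm{j}\rho^{\otimes j}]$. By the identity of Appendix~\ref{app:pt_identity} this equals $p_j$. The value is real because $\rho^{T_B}$ is Hermitian, so measuring the real part in the Hadamard test loses nothing.

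\emph{Estimator and unbiasedness.} Output $Z:=s_j\,y$. Then
\[
\E[Z]=\sum_j \pi_j s_j p_j=\frac{1}{\lambda_\Lambda}\sum_j\alpha_j\Lambda^{-j}p_j=\frac{F_D(M)}{\lambda_\Lambda}.
\]

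\emph{Concentration.} Let $\widehat F:=\lambda_\Lambda\cdot\frac{1}{\Mshot}\sum_s Z^{(s)}$. Its terms are i.i.d. and bounded in $[-\lambda_\Lambda,\lambda_\Lambda]$, so Hoeffding's inequality~\cite{hoeffding1963probability} gives failure probability at most $2\exp(-\Mshot\varepsilon^2/(2\lambda_\Lambda^2))$. Hence
\[
\Mshot=O\!\left(\frac{\lambda_\Lambda^2}{\varepsilon^2}\right)
\]
suffices at constant confidence, and since every shot uses $D$ copies,
\[
\Ntotal=D\Mshot=O\!\left(\frac{\lambda_\Lambda^2 D}{\varepsilon^2}\right).
\]

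\emph{Main obstacle.} The delicate step is making the \emph{coherent} version rigorous rather than the classical mixture. The controlled-$\ptperm{j}$ must act on a fixed $D$-copy register, with copies $j+1,\dots,D$ left untouched. Because $\ptperm{j}$ is generally non-Hermitian for $j>2$, one must check that the selector-controlled unitary $\sum_j\ket{j}\!\bra{j}\otimes\ptperm{j}$ in a Hadamard test returns the real part $\Re\Tr[\ptperm{j}\rho^{\otimes j}]$ weighted by $\pi_j$. There must also be no cross terms between different $j$. Cross terms vanish because the selector is measured or dephased in the computational basis after control; this follows by writing the ancilla-conditioned expectation branchwise. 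The sign bits $s_j$ can be implemented by a controlled phase $Z^{(1-s_j)/2}$ on the Hadamard ancilla.
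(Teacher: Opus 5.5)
Your proposal is correct and follows essentially the same route as the paper: a selector with weights $|\alpha_j|\Lambda^{-j}/\lambda_\Lambda$, signs absorbed into a selector-controlled phase, a Hadamard test of controlled-$\ptperm{j}$ on the first $j$ of $D$ copies whose real part equals $p_j$ by the PT-permutation identity, and Hoeffding applied to the $\pm\lambda_\Lambda$-rescaled outcomes. One small refinement: the absence of cross terms does not depend on measuring or dephasing the selector, because the controlled operation $\sum_j \ket{j}\!\bra{j}\otimes s_j\ptperm{j}$ is block-diagonal in the selector basis, so the ancilla's $X$-expectation is automatically $\sum_j \pi_j s_j\,\mathrm{Re}\,\Tr[\ptperm{j}\rho^{\otimes j}]$.
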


\begin{proof}
The selector register prepares amplitudes proportional to $\sqrt{|\alpha_j|\Lambda^{-j}/\lambda_\Lambda}$.
For real coefficients with signs, the sign of each $\alpha_j$ is absorbed into a selector-controlled phase, equivalently into the measured Hadamard-test observable.
Conditioned on selector value $j-1$ and ancilla value $\ket1$, the circuit applies $\ptperm{j}$ to the first $j$ copies.
For $j>2$, $\ptperm{j}$ need not be Hermitian, so the standard Hadamard-test $X$-measurement returns the real part of $\Tr[\ptperm{j}\rho^{\otimes j}]$. 
This is sufficient here because the PT-permutation identity identifies this quantity with $\Tr[(\rho^{T_B})^j]$, which is real.
The selector construction then realizes the polynomial as a coherent superposition of PT-permutation measurements with the correct rescaled weights.
By the PT-permutation identity from Appendix~\ref{app:pt_identity}, the Hadamard-test outcome is an unbiased estimator of $F_D(M)/\lambda_\Lambda$.
Each shot is bounded in magnitude by $1$, so the same Hoeffding argument as above gives $\Mshot = O(\lambda_\Lambda^2/\varepsilon^2)$.
The coherent realization changes the implementation model and leaves the same coefficient-weight bottleneck in place.
\end{proof}

\section{Noiseless Numerical Sanity Check}
\label{app:sanity}

\begin{figure}[ht]
\centering
\includegraphics[height=0.48\textheight,keepaspectratio]{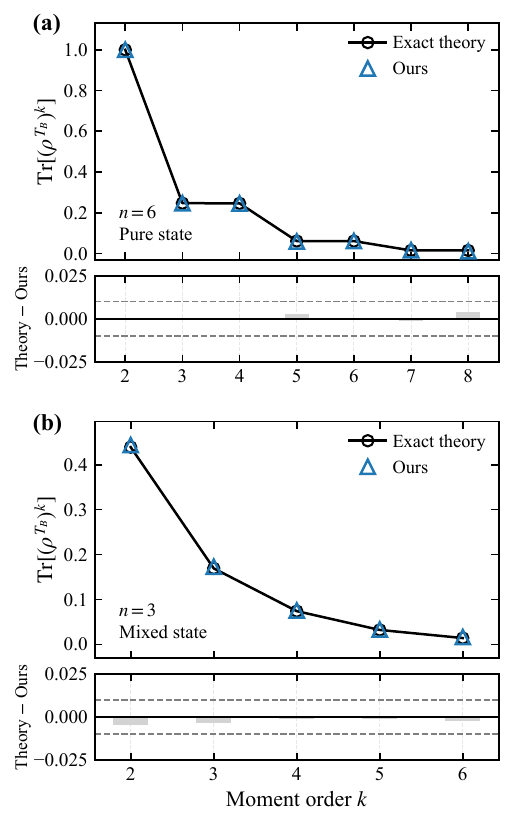}
\caption{Noiseless numerical sanity check of the sequential estimator.
The upper panels compare exact PT moments with the qubit-reuse estimates, while the lower panels show residuals fluctuating around zero within the expected statistical scale.
This figure is included only as a compact consistency check of the analytical identities proved in the paper.}
\label{fig:num_supp}
\end{figure}

We include one representative noiseless simulation figure as a sanity check of the analytical estimator identity. The corresponding pure- and mixed-state checks are shown in Fig.~\ref{fig:num_supp}.
Across the displayed pure- and mixed-state examples, the cumulative-parity estimator tracks the exact PT moments and the residuals fluctuate around zero, consistent with the unbiasedness statement proved in Lemma~\ref{lem:seq} and Appendix~\ref{app:full_lemma}.
This appendix is illustrative only and does not contribute to the theorem statements or converse bounds.

\FloatBarrier
\section{Small-Scale Cloud Compatibility Demonstration}
\label{subsec:hardware}

To assess small-scale compatibility of the protocol with realistic device noise, we executed the dynamic-circuit implementation through the IBM Quantum cloud platform~\cite{Qiskit} using the backend \texttt{ibm\_pittsburgh}, which supports mid-circuit measurement and reset.
Because the data were obtained via remote managed access on a cloud quantum service, and because only a small three-qubit instance is demonstrated, this demonstration should be read as a small-scale cloud compatibility demonstration rather than a standalone hardware experiment or a demonstration of asymptotic hardware advantage.

The demonstration targeted the partial-transpose moments of a generic three-qubit pure state $|\psi\rangle$ prepared by the fixed three-qubit hardware-efficient ansatz in \eqref{eq:prep_unitary}, acting on $|000\rangle$.
Concretely, the preparation unitary has the form
\begin{equation}
\begin{split}
    U_{\mathrm{prep}}
&=
\Big(\bigotimes_{i=1}^{3} R_y(\phi_i)\Big)\,
\mathrm{CX}_{1,2}\,\mathrm{CX}_{0,1}\,
\Big(\bigotimes_{i=1}^{3} R_y(\theta_i)\Big),\\
\ket{\psi}&=U_{\mathrm{prep}}\ket{000}.
\end{split}
\label{eq:prep_unitary}
\end{equation}
In the implementation used for the small-scale cloud compatibility demonstration, the ansatz architecture in \eqref{eq:prep_unitary} was kept fixed, and pseudorandom seed $42$ was used only to generate the rotation angles.
Concretely, the first layer of local rotations in \eqref{eq:prep_unitary} corresponds to
\begin{equation*}
(\phi_1,\phi_2,\phi_3)=(4.86290927,\;2.75755456,\;5.39472984),
\end{equation*}
and the second layer corresponds to
\begin{equation*}
(\theta_1,\theta_2,\theta_3)=(4.38169255,\;0.59173373,\;6.13001603).
\end{equation*}
We estimated PT moments up to fifth order ($K=5$), and computed the corresponding ideal (noise-free) baselines by classical statevector simulation of the same preparation circuit.

Because the protocol estimates all PT moments $k=2,\dots,5$ simultaneously from the same ancilla bitstrings, there is no separate shot budget for each $k$. 
Instead, the target data were obtained from two folded target circuits with noise-scaling factors $\lambda=1$ and $\lambda=3$, each executed with $30000$ shots. 
In addition, the ancilla readout calibration used two single-qubit circuits, each with $30000$ shots.
Thus the target-circuit budget was $2\times 30000=60000$ shots, together with $2\times 30000=60000$ ancilla-calibration shots.

Since the recursive circuit depth grows linearly with $K$, device noise accumulates noticeably for higher-order moments. 
To mitigate these effects, we used the following mitigation pipeline in the final data analysis. 
First, readout errors on the repeatedly measured ancilla were reduced by calibrating a single-qubit assignment matrix and applying the corresponding inverse correction to the measured count distributions. 
Second, we enabled Pauli gate twirling at the primitive level to reduce coherent error accumulation and stabilize the noise profile seen by the circuit family. 
Finally, we applied Clifford data regression (CDR)~\cite{Czarnik2021cdr} to map the raw PT-moment estimates to mitigated values using a training set of classically tractable circuits drawn from the same ansatz family.
More specifically, the CDR training set consisted of $70$ classically tractable three-qubit states generated from the same ansatz architecture with different pseudorandom angles, using seeds $1000,1001,\dots,1069$. 
For each training state, we executed two folded circuits with noise-scaling factors $\lambda=1$ and $\lambda=3$, resulting in a total of $140$ training circuits. 
Each training circuit was sampled with $6000$ shots. 
For every moment order $k=2,\dots,5$, we then fitted an affine ridge-regression model using the feature vector
\begin{equation*}
\bigl(\hat p_k^{(\lambda=1)},\hat p_k^{(\lambda=3)}\bigr),
\end{equation*}
with per-feature standardization and ridge parameter $10^{-2}$. 
No angle snapping was used in the final training set construction (i.e., the training-grid parameter was set to $0$).
The CDR training stage therefore used $140\times 6000=840000$ additional shots.
At the demonstrated scale, the mitigation overhead (including CDR training circuits and calibration shots) exceeds the target-circuit shot budget and should be interpreted as a proof-of-principle workflow cost rather than an asymptotic resource claim.
These implementation-specific overheads from readout calibration, Pauli twirling, and CDR are not part of the theoretical PT-moment copy-complexity guarantee in Theorem~\ref{thm:complexity}.

The results are shown in Fig.~\ref{fig:hardware_demo}.
CDR systematically improves the PT-moment estimates relative to the raw data, especially at low and intermediate moment orders. 
While residual bias remains for the largest values of $k$, the mitigated data track the exact reference more closely than the unmitigated estimates, supporting compatibility of the workflow at this scale with current dynamic-circuit hardware and mitigation tooling.
This mitigation setup should be viewed as a proof-of-principle compatibility check rather than as a claim of universally transferable mitigation performance across unrelated circuit families.

\begin{figure}[t]
    \centering
    \includegraphics[height=0.42\textheight,keepaspectratio]{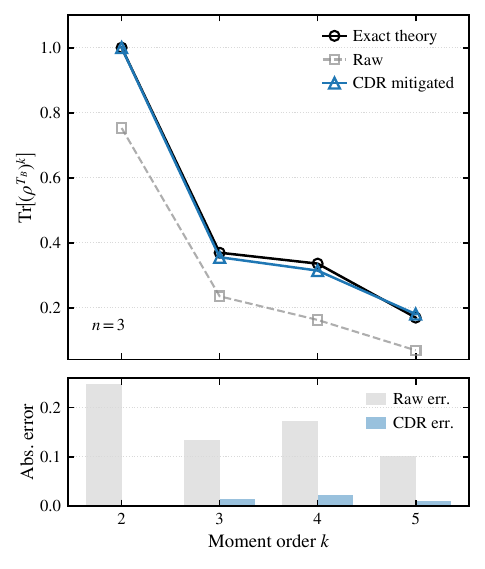}
    \caption{Small-scale cloud compatibility demonstration of PT-moment estimation with CDR mitigation~\cite{Czarnik2021cdr}. 
    We estimate PT moments up to order $K=5$ for a three-qubit test state using dynamic circuits executed through the IBM Quantum cloud platform. 
    Top panel: comparison between exact values (black), raw estimates (gray), and CDR-mitigated estimates (blue). 
    Bottom panel: corresponding absolute errors with respect to the exact theory. 
    The mitigated data show a systematic improvement over the raw estimates.}
    \label{fig:hardware_demo}
\end{figure}

\FloatBarrier
\section{Hardware Details for the Small-Scale Cloud Compatibility Demonstration}
\label{app:hardware_details}

This section presents the device calibration details used in the small-scale cloud compatibility demonstration of Appendix~\ref{subsec:hardware}.
Seven qubits on \texttt{ibm\_pittsburgh} were used in the demonstration, with physical indices $(63,64,75,66,67,77,85)$, as shown in Fig.~\ref{fig:mapping}. Other single-qubit and two-qubit properties of these qubits are summarized in Table~\ref{tab:hardware_properties}. All hardware data are obtained from the IBM cloud quantum platform~\cite{Qiskit}.

\begin{figure}[H]
    \centering
    \includegraphics[width=0.45\linewidth]{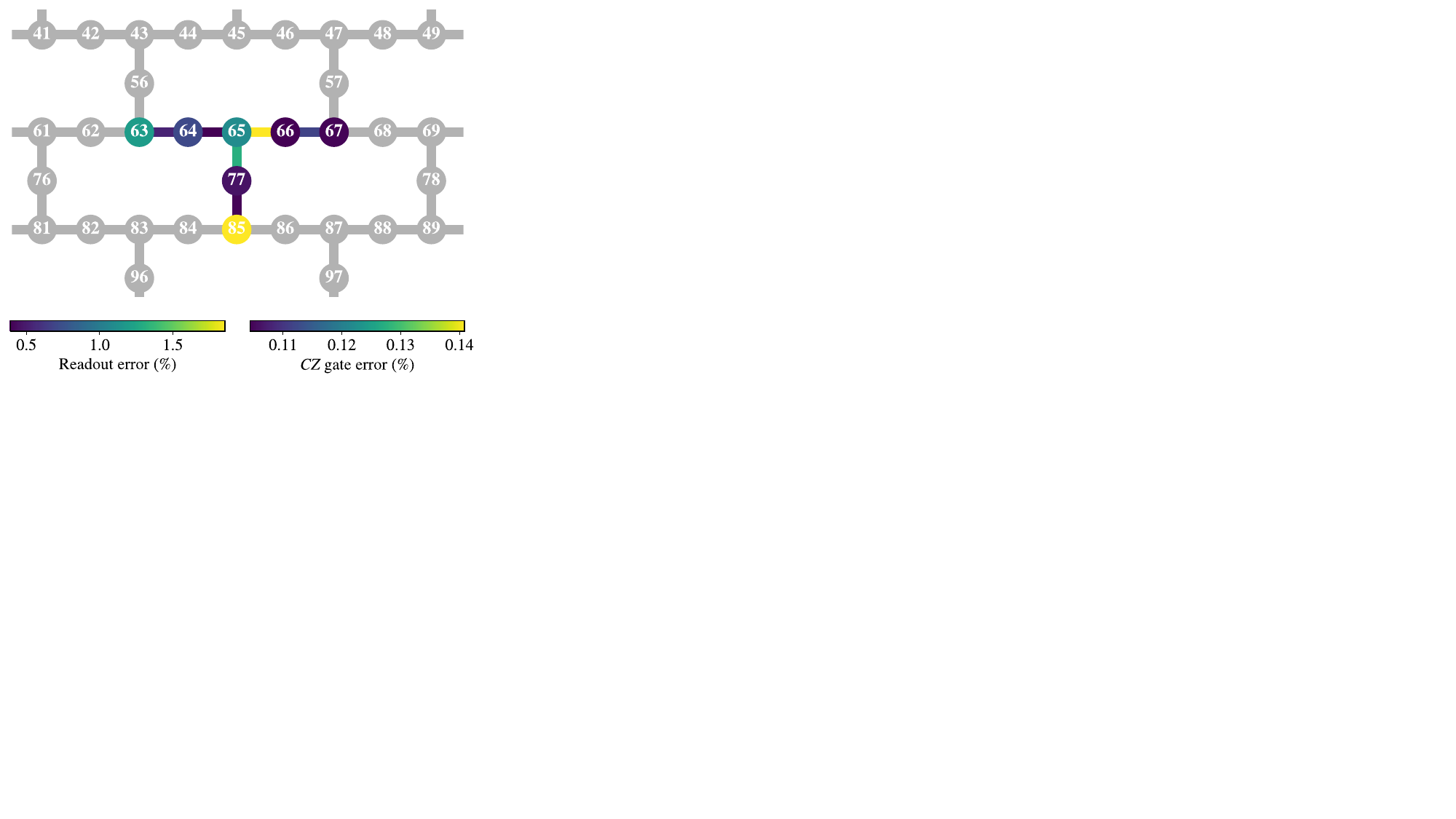}
\cprotect\caption{Physical qubits on \verb|ibm_pittsburgh| and their coupling map. Circles represent qubits that are colored by the readout error rates, and the edges represent the physical coupling of qubits that are colored by the two-qubit gate error rates. Darker color represents smaller error rates.}
    \label{fig:mapping}
\end{figure}

\begin{table}[H]
    \caption{Device Properties for the Seven Qubits Used in the Small-Scale Cloud Compatibility Demonstration.}
    \label{tab:hardware_properties}
    \centering
    \footnotesize
    \renewcommand{\arraystretch}{1.08}
\begin{tabular}{lcccc}
\toprule
 & Median & Mean & Min & Max \\
\midrule
$\sqrt{X}$ error (\%) & 0.012 & 0.017 & 0.010 & 0.047 \\
Readout error (\%) & 0.72 & 0.88 & 0.39 & 1.86 \\
$T_1$ ($\mu$s) & 305.17 & 304.07 & 171.75 & 378.85 \\
$T_2$ ($\mu$s) & 374.47 & 346.81 & 222.58 & 450.09 \\
CZ error (\%) & 0.11 & 0.12 & 0.10 & 0.14 \\
\bottomrule
\end{tabular}
\end{table}

The hardware circuits were generated using Qiskit's preset pass manager with optimization level $3$. 
To stabilize the CDR workflow, we first transpiled the target PT-moment circuit once, extracted its final layout, and then reused this fixed layout for all subsequent target and training circuits. 
At the sampler level, Pauli gate twirling was enabled for gate operations, while measurement twirling was disabled; the twirling strategy was \texttt{active-accum}, with the number of randomizations and the shots per randomization both set to \texttt{auto}. 
For CDR, local unitary folding was applied with scale factors $\lambda\in\{1,3\}$, and the transpiled target ISA circuit used the backend-native two-qubit gate \texttt{cz}. 
Before folding, the target ISA circuit contained $186$ two-qubit \texttt{cz} gates.

\FloatBarrier
\clearpage
\bibliographystyle{IEEEtran}
\bibliography{ref}

\end{document}